\def\BibTeX{{\rm B\kern-.05em{\sc i\kern-.025em b}\kern-.08em
    T\kern-.1667em\lower.7ex\hbox{E}\kern-.125emX}}
	\newcolumntype{Y}{>{\centering\arraybackslash}X}
\definecolor{pastelorange}{HTML}{F79A3D}
\definecolor{pastelblue}{HTML}{2874ae}
\definecolor{pastelgreen}{HTML}{61B940}
\definecolor{pastelviolet}{HTML}{A370AB}
\definecolor{darkblue}{HTML}{1C4987}
\definecolor{chestnut}{HTML}{A24516}
\definecolor{darkgreen}{HTML}{426A5A}
\newcommand{\myparagraph}[1]{\noindent\textbf{#1.}}
\newcommand{\true}{\mathit{true}}
\newcommand{\false}{\mathit{false}}
\renewcommand{\phi}{\varphi}
\newcommand{\ldot}{\mathpunct{.}}
\newcommand{\set}[1]{\{ #1 \}}
\newcommand{\cupdot}{\mathbin{\dot\cup}}
\newcommand{\pow}[1]{2^{#1}}
\newcommand{\apin}{\ap_\mathit{in}}
\newcommand{\apout}{\ap_\mathit{out}}
\newcommand{\Next}{\LTLnext}
\newcommand{\Globally}{\LTLsquare}
\newcommand{\Yesterday}{\LTLcircleminus}
\newcommand{\Y}{\Yesterday}
\newcommand{\Since}{\LTLsince}
\newcommand{\U}{\LTLuntil}
\newcommand{\W}{\LTLweakuntil \,}
\newcommand{\X}{\LTLnext}
\newcommand{\G}{\LTLglobally}
\newcommand{\F}{\LTLeventually}
\newcommand{\ap}{\mathit{AP}}
\newcommand{\traceAssign}{\Pi}
\newcommand{\pathvars}{\mathbb{V}}
\newcommand{\cell}[1]{{\normalfont{\texttt{#1}}}}
\newcommand{\update}[2]{\llbracket #1 \leftarrowtail #2 \rrbracket}
\newcommand{\values}{\mathcal{V}}
\newcommand{\functions}{\mathcal{F}}
\newcommand{\predicates}{\mathcal{P}}
\newcommand{\predTerm}{\tau^p}
\newcommand{\funcTerm}{\tau^f}
\newcommand{\upTerm}{\tau^u}
\newcommand{\predTerms}{\mathcal{T}_P}
\newcommand{\funcTerms}{\mathcal{T}_F}
\newcommand{\updateTerms}{\mathcal{T}_U}
\newcommand{\assign}{{\langle \hspace{-1pt}\cdot \hspace{-1pt}\rangle}}
\newcommand{\funcSymbols}{\Sigma_F}
\newcommand{\predSymbols}{\Sigma_P}
\newcommand{\inputs}{\mathbb{I}}
\newcommand{\cells}{\mathbb{C}}
\newcommand{\cellAssignments}{\mathcal{C}}
\newcommand{\computation}{\varsigma}
\newcommand{\compStream}{\cellAssignments^\omega}
\newcommand{\initial}[1]{\mathit{init}_{#1}}
\newcommand{\inputStream}{\mathcal{I}^\omega}
\newcommand{\eval}{\eta_{\assign}}
\newcommand{\TSLSatisfaction}[1]{\models_{#1}}
\newcommand{\const}[1]{\texttt{#1()}}
\newcommand{\timePoint}{i}
\newcommand{\htslm}{HyperTSL\@\xspace}
\newcommand{\htslp}{HyperTSL\textsubscript{rel}\@\xspace}
\newcommand{\updatepi}[3]{{\llbracket #1 \leftarrowtail #2 \rrbracket}_{#3}}
\newcommand{\projcomp}[1]{\#_\computation(#1)}
\newcommand{\projin}[1]{\#_\iota(#1)}
\newcommand{\exSet}{E}
\newcommand{\execution}{e}
\newcommand{\transl}[1]{[#1]_{\text{atomic}}}
\newcommand{\translT}[1]{\transl{#1}^\assign}
\newcommand{\setpx}{\set{p(x)}}
\newcommand{\method}[1]{\texttt{\color{darkblue}#1}}
\newcommand{\field}[1]{\texttt{\color{chestnut}#1}}
\newcommand{\scsynt}{\textsc{SCSynt}\@\xspace}
\tiny\color{gray},
\lstdefinelanguage{Solidity}{
	keywords=[1]{anonymous, assembly, assert, balance, break, call, callcode, case, catch, class, constant, continue, constructor, contract, debugger, default, delegatecall, delete, do, else, emit, event, experimental, export, external, false, finally, for, function, gas, if, implements, import, in, indexed, instanceof, interface, internal, is, length, library, log0, log1, log2, log3, log4, memory, modifier, new, payable, pragma, private, protected, public, pure, push, require, return, returns, revert, selfdestruct, send, solidity, storage, struct, suicide, super, switch, then, this, throw, transfer, true, try, typeof, using, value, view, while, with, addmod, ecrecover, keccak256, mulmod, ripemd160, sha256, sha3}, 
	keywordstyle=[1]\color{darkblue}\bfseries,
	keywords=[2]{address, bool, byte, bytes, bytes1, bytes2, bytes3, bytes4, bytes5, bytes6, bytes7, bytes8, bytes9, bytes10, bytes11, bytes12, bytes13, bytes14, bytes15, bytes16, bytes17, bytes18, bytes19, bytes20, bytes21, bytes22, bytes23, bytes24, bytes25, bytes26, bytes27, bytes28, bytes29, bytes30, bytes31, bytes32, enum, int, int8, int16, int24, int32, int40, int48, int56, int64, int72, int80, int88, int96, int104, int112, int120, int128, int136, int144, int152, int160, int168, int176, int184, int192, int200, int208, int216, int224, int232, int240, int248, int256, mapping, string, uint, uint8, uint16, uint24, uint32, uint40, uint48, uint56, uint64, uint72, uint80, uint88, uint96, uint104, uint112, uint120, uint128, uint136, uint144, uint152, uint160, uint168, uint176, uint184, uint192, uint200, uint208, uint216, uint224, uint232, uint240, uint248, uint256, var, void, ether, finney, szabo, wei, days, hours, minutes, seconds, weeks, years},	
	keywordstyle=[2]\color{teal}\bfseries,
	keywords=[3]{block, blockhash, coinbase, difficulty, gaslimit, number, timestamp, msg, data, gas, sender, sig, value, now, tx, gasprice, origin},	
	keywordstyle=[3]\color{pastelviolet!80!black}\bfseries,
	identifierstyle=\color{black},
	sensitive=false,
	comment=[l]{//},
	morecomment=[s]{/*}{*/},
	commentstyle=\color{gray}\ttfamily,
	stringstyle=\color{red!70!black}\ttfamily,
	morestring=[b]',
	morestring=[b]"
}
\newcounter{specification}
\newcounter{solidity}
\newtheorem{theorem}{Theorem}[section]
\newtheorem{lemma}[theorem]{Lemma}
\newtheorem{proposition}[theorem]{Proposition}
\theoremstyle{definition}
\newtheorem{definition}{Definition}[section]
\begin{document}

\title{Smart Contract Synthesis Modulo Hyperproperties
\thanks{This work was partially supported by the German Research Foundation (DFG) as part of the Collaborative Research Center “Foundations of Perspicuous Software Systems” (TRR 248, 389792660), and by the European Research Council (ERC) Grant OSARES (No. 683300). N. Coenen, J. Hofmann, and J. Tillman carried out this work as members of the Saarbr\"ucken Graduate School of Computer Science.
}
}

\author{
\IEEEauthorblockN{Norine Coenen}
\IEEEauthorblockA{\textit{CISPA Helmholtz Center for Information Security}\\
Saarbr\"ucken, Germany \\
norine.coenen@cispa.de}
\and
\IEEEauthorblockN{Bernd Finkbeiner}
\IEEEauthorblockA{\textit{CISPA Helmholtz Center for Information Security}\\
Saarbr\"ucken, Germany \\
finkbeiner@cispa.de}
\and
\IEEEauthorblockN{Jana Hofmann}
\IEEEauthorblockA{\textit{CISPA Helmholtz Center for Information Security}\\
Saarbr\"ucken, Germany \\
jana.hofmann@cispa.de}
\and
\IEEEauthorblockN{Julia Tillman}
\IEEEauthorblockA{\textit{CISPA Helmholtz Center for Information Security}\\
Saarbr\"ucken, Germany \\
julia.tillman@cispa.de}
}

\maketitle
\IEEEpeerreviewmaketitle

\begin{abstract}
	Smart contracts are small but highly security-critical programs that implement wallets, token systems, auctions, crowd funding systems, elections, and other multi-party transactions on the blockchain. A broad range of methods has been developed to ensure that a smart contract is functionally correct. However, smart contracts often additionally need to satisfy certain hyperproperties, such as symmetry, determinism, or an information flow policy. In this paper, we show how a synthesis method for smart contracts can ensure that the contract satisfies its desired hyperproperties. We build on top of a recently developed synthesis approach from specifications in the temporal logic TSL. We present HyperTSL, an extension of TSL for the specification of hyperproperties of infinite-state software. As a preprocessing step, we show how to detect if a hyperproperty has an equivalent formulation as a (simpler) trace property. Finally, we describe how to refine a synthesized contract to adhere to its HyperTSL specification.
\end{abstract}

\begin{IEEEkeywords}
hyperproperties, smart contracts, reactive synthesis, temporal logics
\end{IEEEkeywords}

\section{Introduction}
Smart contracts are small programs which describe transactions between multiple parties. Common smart contracts implement wallets, token systems, auctions, crowd funding systems, and elections. Smart contracts run on the blockchain and thus remove the need for a trusted third party.
Bugs in smart contracts have led to huge monetary losses in the last decade, making the need for rigorous formal foundations for smart contracts obvious.
In the absence of a trusted third party, it is especially important to ensure that a contract implements the agreed order of transactions.
The large body of existing work on the correctness of a contract's control flow mainly consists of verification methods to ensure that a contract exhibits a correct behavior with respect to \emph{trace properties}.


Smart contracts often additionally need to satisfy \emph{hyperproperties}, however.
Hyperproperties describe the relation between several execution traces of the system.
Prominent hyperproperties are information flow policies like noninterference and observational determinism.
Existing work on hyperproperties in smart contracts focuses on verifying concrete information flow policies such as integrity, which are enforced using language-based methods~\cite{DBLP:conf/sp/CecchettiYNM21, cecchetti2020securing}.
Hyperproperties are not limited to information flow policies, though.
To establish the users' trust in a contract, one has to additionally show that a contract satisfies hyperproperties such as robustness and symmetry.
In a voting contract, for example, all candidates should be treated symmetrically, and a vote from the owner of the contract should not count more than any other vote.

In this work, we develop a general-purpose temporal logic, HyperTSL, to express the wide range of hyperproperties relevant in the context of smart contracts.
We then show how to automatically construct the underlying state machine of a smart contract directly from the logical specification of the hyperproperties.
We approach the task from two angles.
First, we investigate the synthesis problem of the logic and show how to approximate the in general undecidable problem to obtain correct-by-design implementations.
Second, we propose a two-step approach to include \htslm properties in a recently proposed smart contract synthesis approach~\cite{scsynt}. \\

\myparagraph{HyperTSL}
We develop two variants of the logic, \htslm and \htslp, which are both based on temporal stream logic (TSL)~\cite{FinkbeinerKPS19}.
TSL extends linear temporal logic (LTL) with the concept of cells and uninterpreted functions and predicates. These mechanisms separate the temporal control flow from the concrete data in the system, which enables reasoning about infinite-state systems. TSL has been applied successfully to synthesize hardware~\cite{DBLP:journals/corr/abs-2101-07232}, functional programs~\cite{DBLP:conf/haskell/Finkbeiner0PS19}, and also to describe the underlying state machines of smart contracts~\cite{scsynt}. As an example, we can state that the winner of an election remains the same after the method \method{close} has been called.
$$
	\G (\method{close} \rightarrow \X \G \update{\field{winner}}{\field{winner}})
$$
The contract's method \method{close} is modeled as a boolean and the field \field{winner} is modeled as a cell.
We define \htslm as a conservative extension of TSL. It adds quantification over multiple executions; predicates, functions, and updates then refer to one of the quantified executions. In an election with two candidates \texttt{A} and \texttt{B}, we can state that two traces have the same winner \texttt{A} if they agree on the votes for \texttt{A}.
\begin{align*}
	\forall \pi \forall \pi' \ldot & \G(\method{voteA}_\pi \leftrightarrow \method{voteA}_{\pi'}) \\
	& \rightarrow \G(\updatepi{\field{winner}}{\const{A}}{\pi} \leftrightarrow \updatepi{\field{winner}}{\const{A}}{\pi'})
\end{align*}
The second hyperlogic \htslp allows to relate executions within a predicate. Here, we can state that the winners on two executions are always the same as long as the votes are always the same.
\begin{align*}
	\forall \pi \forall \pi' \ldot & \G (\texttt{vote}_\pi = \texttt{vote}_{\pi'}) \rightarrow \G (\field{winner}_\pi = \field{winner}_{\pi'})
\end{align*}
Note that the $=$ predicate ranges over two different execution variables $\pi$ and $\pi'$.
In TSL, all functions and predicates are uninterpreted, but $=$ would probably be implemented as actual equality.\\

\myparagraph{Smart contract synthesis from \htslm}
Synthesis constructs correct-by-design systems from a formal specification.
It therefore shifts the development from software writing to specification writing.
This idea fits especially well to smart contracts. Most contracts belong to one of the major classes like token systems or election protocols, and their behavior can be characterized with similar specifications. 
This also extends to hyperproperties: the set of actually relevant hyperproperties is often the same across different systems.

We therefore investigate the synthesis problem of \htslm. Inherited from TSL, the problem is undecidable already for a single quantifier. We show, however, that the $\forall^*$ fragment of \htslm can be approximated in $\forall^*$ HyperLTL, for which there exists a bounded synthesis approach implemented in HyperBoSy~\cite{DBLP:journals/acta/FinkbeinerHLST20}.
For the $\exists^*$ fragment of \htslm, we present an approximation based on LTL satisfiability checking.

As a step towards actual smart contract synthesis, we build on a recent workflow that implements the synthesis of a smart contract control flow from TSL specifications~\cite{scsynt}.
We present a two-step approach to synthesize a contract that adheres to HyperTSL specifications.
First, we check if the combination of a given TSL specification with a $\forall^*$ \htslm specification lets the hyperproperty ``collapse'' to a simple trace property. We show that the check is undecidable in general, but can to some extend be approximated in a decidable fragment of with HyperLTL satisfiability checking.
If this is not the case, we synthesize the most general solution (the winning region) for the TSL specification using the tool presented in~\cite{scsynt}. We then prune the system to find a strategy that implements a $\forall^*$ \htslm property.
Our implementation of the approach shows that we can automatically construct a voting contract which satisfies several hyperproperties.\\

\myparagraph{Contributions}
We present the first solution to automatically construct smart contracts with respect to hyperproperties. 
\begin{itemize}
	\item We define \htslm and \htslp, two hyperlogics based on TSL, and show how they can specify hyperproperties occurring in the context of smart contracts.
	\item We show that the synthesis problem of $\forall^*$ \htslm can be over-approximated by $\forall^*$ HyperLTL synthesis and that the synthesis problem of $\exists^*$ \htslm can be under-approximated by LTL satisfiability checking.
	\item We describe a preprocessing step to detect if a \htslm formula has an equivalent TSL formulation.
	\item We describe and implement a repair-like pruning algorithm that refines a smart contract synthesized from a TSL specification to satisfy a $\forall^*$ \htslm formula. \\
\end{itemize}

\myparagraph{Structure of the paper}
We first give the necessary preliminaries, including the definitions of HyperLTL and TSL. In \Cref{sec:hypertsl}, we present \htslm and \htslp, and show how to specify hyperproperties in a voting contract. We proceed by investigating the realizability problem of \htslm and argue why \htslm is more suited for synthesis than \htslp. In \Cref{sec:SCSynthesis}, we recap the synthesis workflow based on TSL~\cite{scsynt} and discuss the combination of the TSL specification of the voting contract with the \htslm specifications. \Cref{sec:pseudo_hyperprops} discusses the detection of pseudo hyperproperties. \Cref{sec:repair} presents the pruning of a contract with respect to \htslm, including a prototype implementation.

\section{Preliminaries}
We recap the definitions of HyperLTL and TSL.
\subsection{HyperLTL}
Let $\ap$ be an infinite set of atomic propositions. A trace $t$ is an element of $(2^\ap)^\omega$. We write $t[i]$ for the $i$th element of $t$. A trace property $P$ is a set of traces, a hyperproperty $H$ is a set of sets of traces.
HyperLTL~\cite{HyperLTL} describes hyperproperties by extending LTL with explicit trace quantification. It is defined with respect to an infinite set of trace variables $\pathvars$.
\begin{align*}
	\varphi &{}\Coloneqq \forall\pi\ldot\varphi \mid \exists\pi\ldot\varphi \mid \psi \enspace \\
	\psi &{}\Coloneqq a_\pi \mid \neg\psi \mid \psi\land\psi \mid \X\psi \mid \psi\U\psi,
\end{align*}
where $a \in AP$ and $\pi \in \pathvars$.
In HyperLTL, atomic propositions are indexed with the trace variable they refer to.
The semantics of HyperLTL is defined with respect to a set of traces $T$.
Let $\traceAssign : \pathvars \to T$ be a trace assignment that maps trace variables to traces in $T$. 
To update $\traceAssign$, we write $\traceAssign[\pi \mapsto t]$, where $\pi$ maps to $t$ and all other trace variables are as in $\traceAssign$.
The satisfaction relation of HyperLTL is defined with respect to a trace assignment $\traceAssign$, a set of traces $T$, and a point in time $i$.
\begin{alignat*}{3}
	&\traceAssign,T, i \models a_\pi       &&\text{ iff } &&a \in \traceAssign(\pi)[i] \\
	&\traceAssign,T, i \models \neg \psi              && \text{ iff } &&\traceAssign,T, i \not\models \psi \\
	&\traceAssign,T, i \models \psi_1 \land \psi_2         && \text{ iff } &&\traceAssign,T, i \models \psi_1 \text{ and } \traceAssign,T, i \models \psi_2 \\
	&\traceAssign,T, i \models \X \psi                && \text{ iff } &&\traceAssign,T, i+1 \models \psi \\
	&\traceAssign,T, i \models \psi_1 \U \psi_2             && \text{ iff } &&\exists j \geq i \ldot ~ \traceAssign,T,  j \models \psi_2 \\
	& && && \qquad \text{and } \forall i \leq k < j \ldot ~\traceAssign,T, k \models \psi_1 \\
	&\traceAssign,T, i \models \exists \pi \ldot \varphi && \text{ iff } && \exists t \in T \ldot \traceAssign[\pi \mapsto t], T, i \models \varphi\\
	&\traceAssign,T, i \models \forall \pi \ldot \varphi && \text{ iff } && \forall t \in T \ldot \traceAssign[\pi \mapsto t], T, i \models \varphi
\end{alignat*}
As for LTL, we can derive the usual boolean logical connectives and temporal operators ``eventually'' $\F \psi = \top \U \psi$, ``globally'' $\G \psi = \neg \F \neg \psi$, and ``weak until'' $\psi_1 \W \psi_2 = (\psi_1 \U \psi_2) \lor \G \psi_1$.
When specifying smart contracts, we assume specifications to belong to the \emph{syntactic safety} fragment, i.e., that they only contain $\X$, $\G$, and $\W$ as temporal operators.
Formulas from the syntactic safety fragment can be equivalently expressed using \emph{past-time temporal operators}, which reason about the past instead of the future of a trace. ``Yesterday'' $\Y$ is the past-time equivalent of $\X$, and ``since'' $\Since$ corresponds to $\U$. 
We say that a trace set $T$ satisfies a HyperLTL formula $\varphi$, written as $T \models \varphi$, if $\emptyset, T, 0\models \varphi$, where $\emptyset$ denotes the empty trace assignment.

We define the HyperLTL realizability problem as in~\cite{conf/cav/FinkbeinerHLST18}.
We assume $\ap$ to be a disjoint union of a set of input propositions and a set of output propositions, i.e., $\ap = \apin \cupdot \apout$.
A \emph{strategy} $\sigma : (\pow{\apin})^+ \rightarrow \pow{\apout}$  maps finite sequences of input valuations to an output valuation.
For an infinite word $w = w_0 w_1 w_2 \cdots \in (\pow{\apin})^\omega$, the trace generated by a strategy $\sigma$ is defined as $\sigma_w = (w_0 \cup \sigma(w_0))(w_1 \cup \sigma(w_0 w_1))\ldots \in (2^{\ap})^\omega$.
We write $\mathit{traces}(\sigma)$ for the set $\set{\sigma_w \mid w \in (\pow{\apin})^\omega}$.
We say that a strategy $\sigma$ satisfies a HyperLTL formula $\varphi$ over $\ap = \apin \cupdot \apout$ if $\mathit{traces}(\sigma) \models \varphi$.
\begin{definition}[HyperLTL realizability]
	A HyperLTL formula $\varphi$ over $\ap = \apin \cupdot \apout$ is realizable if there is a strategy $\sigma \colon (\pow{\apin})^+ \rightarrow \pow{\apout}$ that satisfies $\varphi$.
\end{definition}

\subsection{TSL}
We introduce the syntax and semantics of TSL following~\cite{scsynt, FinkbeinerKPS19}.
TSL extends LTL with the concept of cells, which hold data from an arbitrary domain.
To abstract from concrete data points, TSL employs uninterpreted functions and predicates.
A TSL formula describes a system that receives a stream of inputs, abstracts from the concrete values using predicates, and produces a stream of cell updates using function applications.
We formally define values, functions, and predicates.
The set of all values is denoted by~$\values$, the Boolean values by $\mathbb{B} \subseteq \values$.
An $n$-ary function $f: \values^n \rightarrow \values$ computes a value from~$n$ values.
An $n$-ary predicate $p: \values^n \rightarrow \mathbb{B}$ assigns a boolean value to $n$ values.
The sets of all functions and predicates are denoted by $\functions$ and $\predicates \subseteq \functions$, respectively.
Constants are both 0-ary functions and values.

Let $\inputs$ and $\cells$ be the set of inputs and cells, and let $\funcSymbols$ and $\predSymbols \subseteq \funcSymbols$ be the set of function and predicate symbols.
\emph{Function terms~$\funcTerm$} are recursively defined by
$$
\funcTerm \Coloneqq \texttt{s}~ \mid ~\texttt{f}~ \tau^f \ldots \, \tau^f
$$
where $\texttt{s} \in \inputs \cup \cells$ and $\texttt{f}$ is an n-ary function symbol.
\emph{Predicate terms~$\predTerm$} are obtained by applying a predicate to function terms.
The sets of all function and predicate terms are denoted by $\funcTerms$ and $\predTerms \subseteq \funcTerms$, respectively.
TSL formulas are built according to the following grammar:
\[ 
\psi \Coloneqq \neg \psi \,\mid\, \psi \land \psi \,\mid\, \X \psi \,\mid\, \psi \U \psi \,\mid\, \predTerm \,\mid\, \update{\cell{c}}{\funcTerm}
\]
where $\cell{c} \in \cells$, $\predTerm \in \predTerms$, and $\funcTerm \in \funcTerms$.
An \emph{update term} $\update{\cell{c}}{\tau_f}$ denotes that the value of function term $\tau_f$ is assigned to cell~$\cell{c}$. We denote the set of update terms with $\updateTerms$.

Function and predicate terms are syntactic objects, i.e. the term $\texttt{p}(\texttt{f}~\texttt{c})$ only becomes meaningful when we assign, for example, \texttt{c} to the value $3$, \texttt{f} to the function $f(x) = x + 1$ and \texttt{p} to $p(x) = x > 0$.
To assign a semantic interpretation to function and predicate symbols, we use an \emph{interpretation} $\assign: \funcSymbols \rightarrow \functions$.
A cell assignment $\cells \rightarrow \funcTerms$ is a total function assigning exactly one function term to each cell.
The set of all assignments $\cells \rightarrow \funcTerms$ is denoted by $\cellAssignments$.
A \emph{computation} $\computation \in \compStream$ describes the control flow of the cells, i.e., which function term is associated with a cell at which point in time.
For every cell $\cell{c} \in \cells$, let $\initial{\cell{c}}$ be a designated value assigned initially to $\cell{c}$.
Input streams~$\inputStream$ are infinite sequences of assignments of inputs to values.
Given a computation, an input stream, and a point in time $\timePoint$, we can \emph{evaluate} a function term.
The evaluation function $\eval: \compStream \times \inputStream \times \mathbb{N} \times \funcTerms \rightarrow \values$ is defined as
\begin{align*}
	&\eval(\computation,\iota,\timePoint,\cell{s}) \coloneqq 
	\begin{cases}
		\iota[\timePoint]~\cell{s} & \cell{s} \in \inputs \\
		\initial{\cell{s}} & \cell{s} \in \cells \land \timePoint = 0 \\
		\eval(\computation,\iota,\timePoint\!-\!1,\computation[\timePoint\!-\!1]~\cell{s}) &  \cell{s} \in \cells \land \timePoint>0
	\end{cases} \\
	& \eval(\computation,\iota,\timePoint,\texttt{f}~ \tau_0 \dots \tau_{n}) \coloneqq \\
	& \phantom{\eval(\computation,\iota,\timePoint,\cell{s}) \coloneqq} ~ \langle \texttt{f} \rangle ~ \eval(\computation,\iota,\timePoint, \tau_0) ~\dots ~\eval(\computation,\iota,\timePoint,\tau_{n})
\end{align*}
Note that $\iota[\timePoint]~\cell{s}$ denotes the value that input stream $\iota$ assigns to input $\cell{s}$ at position $\timePoint$.
Likewise, $\computation[\timePoint]~ \cell{s}$ is the function term that $\computation$ assigns to cell $\cell{s}$ at point in time $\timePoint$.
As an example, to compute the value of cell \cell{x} in step $\timePoint$, we might obtain from the computation that \cell{x} is updated to $\texttt{f}~\cell{x}$ in step $\timePoint$, so we recursively evaluate \cell{x} in step $\timePoint-1$ and apply the function assigned to $\texttt{f}$ to the result.
We evaluate a TSL formula $\psi$ with respect to an assignment function~$\assign$, an input stream $\iota \in \inputStream$, a computation $\computation \in \compStream$, and a time step $\timePoint\in\mathbb{N}$.
\begin{alignat*}{3}
	&\computation, \iota, \timePoint \TSLSatisfaction{\assign} \neg \psi ~ &\quad \text{iff} \quad & \computation, \iota, \timePoint \not\TSLSatisfaction{\assign} \psi\\
	&\computation, \iota, \timePoint \TSLSatisfaction{\assign} \psi_1 \land \psi_2 ~ &\quad \text{iff} \quad & \computation, \iota, \timePoint \TSLSatisfaction{\assign} \psi_1 \text{ and } \computation, \iota, \timePoint \TSLSatisfaction{\assign} \psi_2\\
	&\computation, \iota, \timePoint \TSLSatisfaction{\assign} \X \psi ~ &\quad \text{iff} \quad & \computation, \iota,  \timePoint+1 \TSLSatisfaction{\assign} \psi\\
	&\computation, \iota, \timePoint \TSLSatisfaction{\assign} \psi_1 \U \psi_2 ~ &\quad \text{iff} \quad & ~ \exists j \geq \timePoint \ldot \computation, \iota, j \TSLSatisfaction{\assign} \psi_2 \\ & & & \text{ and }\forall \timePoint \leq k < j \ldot \computation, \iota, k \TSLSatisfaction{\assign} \psi_1 \\
	&\computation, \iota, \timePoint \TSLSatisfaction{\assign} \update{\cell{c}}{\tau} ~ &\quad \text{iff} \quad & ~ \computation[\timePoint]~\cell{c} \equiv \tau\\
	&\computation, \iota, \timePoint \TSLSatisfaction{\assign} \texttt{p}~ \tau_1 \dots \tau_{n} ~ &\quad \text{iff} \quad & ~ \eval(\computation,\iota,\timePoint,\texttt{p}~ \tau_1 \dots \tau_{n})
\end{alignat*}
We use $\equiv$ to syntactically compare two terms. 
An \emph{execution} $(\computation, \iota)$ satisfies a TSL formula~$\psi$ if $\computation, \iota, 0 \TSLSatisfaction{\assign} \psi$ holds. In this case, we write $\computation, \iota \TSLSatisfaction{\assign} \psi$.
A TSL formula $\psi$ is \emph{satisfiable}  iff there exists an interpretation $\assign$ and an execution $(\computation, \iota)$ such that $\computation, \iota \TSLSatisfaction{\assign} \psi$~\cite{DBLP:journals/corr/abs-2104-14988}.

\myparagraph{TSL realizability}
The realizability problem of a TSL formula $\psi$ asks whether there exists a strategy which reacts to predicate evaluations with cell updates according to $\psi$. Formally, a strategy is a function $\sigma : (2^{\predTerms})^+ \rightarrow \cellAssignments$.
Given a strategy $\sigma$ and an input stream $\iota$, we can define the resulting computation $\sigma(\iota)$. To compute the cell updates at point in time $\timePoint$, we require the current inputs as well as the history of cell updates:
\begin{align*}
	\sigma(\iota)[\timePoint] \coloneqq \sigma(\, & \set{\predTerm \in \predTerms \mid \eval(\sigma(\iota), \iota, 0, \predTerm)} \cdot \ldots \\
	& \ldots \cdot \set{\predTerm \in \predTerms \mid \eval(\sigma(\iota), \iota, \timePoint, \predTerm)} \,)
\end{align*}
Note that in order to define $\sigma(\iota)[\timePoint]$, the definition uses $\sigma(\iota)$. This is well-defined since the evaluation function $\eval(\computation, \iota, \timePoint, \tau)$ only uses $\computation[0] \ldots \computation[\timePoint-1]$.

\begin{definition}[TSL realizability \cite{FinkbeinerKPS19}]
	A TSL formula $\psi$ is \emph{realizable} iff there exists a strategy $\sigma : (2^{\predTerms})^+ \rightarrow \cellAssignments$ such that for every input stream $\iota \in \inputStream$ and every assignment function $\assign: \predSymbols \rightarrow \functions$, it holds that $\sigma(\iota), \iota \TSLSatisfaction{\assign} \psi$.
\end{definition}

\section{HyperTSL}
\label{sec:hypertsl}
We extend TSL with execution quantifiers to allow reasoning about hyperproperties for infinite-state systems. There is not a single obvious way to lift TSL to a hyperlogic, so we motivate in the following our choices for \htslm.

\myparagraph{Function and predicate interpretations}
There are two options to deal with function and predicate interpretations. The first is to interpret functions and predicates differently across multiple executions. This would make sense, for example, to model multiple components of a system. Most of the hyperproperties, however, concern the behavior of the same system when executed several times. Functions and predicates like equality and addition mostly stay the same across several executions. We therefore decide to quantify the function and predicate interpretation \emph{before} quantifying executions.

\myparagraph{Domain of updates and predicates}
Another question is whether predicates, functions, and update terms may contain terms evaluated on different executions. In the case of update terms the answer should probably be no: an update of the form $\update{\texttt{c}_\pi}{\texttt{f}(\texttt{c}_{\pi'})}$ would mean that the system could access the value of a cell at the same point in time but on a different execution. This does not seem realistic for a real system, which would have to store all possible values a cell \texttt{c} could hold at some point in time. Additionally, the execution chosen for $\pi'$ differs with every quantifier evaluation.
Predicates ranging over multiple executions, on the other hand, would be useful. One could state a stronger version of observational determinism, namely ``if two traces agree on the value of input \texttt{x}, then also cell \cell{c} is always updated to the same term'', written as $\G (\cell{x}_\pi = \cell{x}_{\pi'}) \rightarrow \G (\updatepi{\cell{c}}{\texttt{f}(\cell{x})}{\pi} \leftrightarrow \updatepi{\cell{c}}{\texttt{f}(\cell{x})}{\pi'})$. We therefore define two logics: \htslm, in which we allow predicates to range only over the \emph{same} execution, and \htslp, which relates multiple executions using predicates.


\subsection{\htslm}
We define \htslm as an extension of TSL. The definition of function terms $\funcTerm$ and predicate terms $\predTerm$ are the same as in TSL. The syntax of \htslm is given as follows.
\begin{align*}
	\varphi &\Coloneqq \forall\pi\ldot\varphi \mid \exists\pi\ldot\varphi \mid \psi \enspace \\
	\psi &\Coloneqq \neg \psi \,\mid\, \psi \land \psi \,\mid\, \X \psi \,\mid\, \psi \U \psi \,\mid\, \predTerm_\pi \,\mid\, \updatepi{\cell{c}}{\funcTerm}{\pi}
\end{align*}
where $\pi \in \pathvars$, $\cell{c} \in \cells$, $\predTerm \in \predTerms$, and $\funcTerm \in \funcTerms$.
We evaluate a \htslm formula $\varphi$ with respect to a set of executions $\exSet \subseteq \compStream \times \inputStream$, an execution assignment function $\traceAssign$, and a point in time $\timePoint$. We omit the cases for $\neg$, $\land$, $\X$, and $\U$, which follow closely the respective cases in TSL.
\begin{alignat*}{3}%
	&\traceAssign, \exSet, \timePoint\TSLSatisfaction{\assign} \updatepi{\cell{c}}{\funcTerm}{\pi} &~ \text{iff} ~ & \projcomp{\traceAssign(\pi)}~i~\cell{c} \equiv \funcTerm\\
	&\traceAssign, \exSet, \timePoint\TSLSatisfaction{\assign} (\texttt{p} \, \tau_1 \dots \tau_{n})_\pi &~ \text{iff} ~ & \eval(\computation,\iota,\timePoint ,\texttt{p} ~ \tau_1 \dots \tau_{n}) \\
	& & & \quad \text{for }\computation = \projcomp{\traceAssign(\pi)} \text{ and }\\
	& & & \quad \phantom{\text{for }}\iota = \projin{\traceAssign(\pi)}  \\
	&\traceAssign, \exSet, \timePoint \TSLSatisfaction{\assign} \exists \pi \ldot \varphi & \text{ iff } & \exists \execution \in \exSet \ldot \traceAssign[\pi \mapsto \execution], \exSet, \timePoint \TSLSatisfaction{\assign} \varphi \\
	&\traceAssign, \exSet, \timePoint \TSLSatisfaction{\assign} \forall \pi \ldot \varphi & \text{ iff } & \forall \execution \in \exSet \ldot \traceAssign[\pi \mapsto \execution], \exSet, \timePoint \TSLSatisfaction{\assign} \varphi	
\end{alignat*}
Here, $\projin{\cdot}$ and $\projcomp{\cdot}$ are used to project an execution on its input stream and its computation. Given an assignment function $\assign: \funcSymbols \rightarrow \functions$, a set of executions $\exSet$ satisfies a \htslm formula $\varphi$, written $E \TSLSatisfaction{\assign} \varphi$, if $\emptyset, \exSet, 0 \TSLSatisfaction{\assign} \varphi$.

\subsection{\htslp}
In \htslp, predicate terms are defined with respect to the set of executions variables $\pathvars$. If $\tau^1, \ldots , \tau^n$ are function terms, $\pi_1, \ldots \pi_n \in \pathvars$, and \texttt{p} is an $n$-ary predicate symbol, then \texttt{p} $\tau^1_{\pi_1} ~ \ldots ~ \tau^n_{\pi_n}$ is a predicate term. The syntax of \htslp is that of \htslm with the exception that we denote predicate terms as $\tau^p_{\pi_1, \ldots, \pi_n}$ instead of $\tau^p_{\pi}$. The semantics of \htslp is that of \htslm except for the semantics of the predicate, which we define as follows.
\begin{align*}
	&\traceAssign, \exSet, \timePoint\TSLSatisfaction{\assign} \texttt{p} ~ \tau^1_{\pi_1} \, \ldots \, \tau^n_{\pi_n} \quad \text{iff} \quad \langle \texttt{p} \rangle ~ v_1 ~ \ldots ~ v_n \\
	&\text{where } v_j = \eval(\projcomp{\traceAssign(\pi_j)},\projin{\traceAssign(\pi_j)},\timePoint ,\tau^j) ~ \text{for } 1 \leq j \leq n 
\end{align*}
Satisfaction of a \htslp formula is defined as for \htslm.

\subsection{HyperTSL for Smart Contracts}
\label{sec:smartContracts}
\label{subsec:votingProtocol}
As for TSL, the strength of \htslm and \htslp is in their use of uninterpreted predicates and functions together with a cell mechanism. That way, the logics can describe control flow properties like ``if predicate $\texttt{p}(\texttt{c})$ evaluates to $\true$, cell \texttt{c} should be updated by applying function \texttt{f} to it''. Since the concrete implementation of \texttt{p} and \texttt{f} is not relevant for the property, the synthesized system can be later completed to a concrete system by implementing \texttt{p} and \texttt{f} as needed.

In this section, we showcase how the logics can express temporal hyperproperties of smart contracts.
As a running example, we use a simple voting contract.
We informally describe the contract and specify hyperproperties that should hold in such a contract. 
This showcases the expressiveness of the logics and makes the differences in expressiveness become clear. We postpone the formal specification of the trace properties of the contract to \Cref{sec:SCSynthesis}. 

For simplicity, we describe the core features of an exemplary voting contract, which are enough to demonstrate typical hyperproperties occurring in a voting scenario. The contract describes an election where users can vote for either candidate \texttt{A} or candidate \texttt{B} by calling methods \method{voteA} and \method{voteB}. The contract can be closed by the owner, who can call method \method{close}. \method{voteA}, \method{voteB}, and \method{close} are modeled as booleans. We assume that in each step, exactly one of \method{voteA}, \method{voteB}, and \method{close} holds.
The contract has two fields, \field{votesA} and \field{votesB}, to store the number of votes recorded for the candidates. Both fields are modeled as cells in TSL. In the smart contract setting, we use the term field and cell interchangeably. Furthermore, \field{winner} holds the current winner chosen by the contract. The winner is updated in every step where the contract receives a vote, so either $\update{\field{winner}}{\const{A}}$ or $\update{\field{winner}}{\const{B}}$ holds globally. \const{A} and \const{B} are constants.

%
The hyperproperties relevant in such a setting are typical hyperproperties such as determinism and symmetry. Consider, for example, the question which candidate is chosen as winner if there is a tie. One would expect a form of symmetry, e.g., that the winner is not simply always candidate \texttt{A}. Similarly, one typically requires some sort of determinism, i.e., that in similar situations the same winner is chosen.
We use the following syntactic sugar.
\begin{align*}
	&\mathit{sameWinner}(\pi, \pi') \coloneqq \\
	& \quad \phantom{\land} (\updatepi{\field{winner}}{\const{A}}{\pi} \leftrightarrow \updatepi{\field{winner}}{\const{A}}{\pi'}) \\
	& \quad \land (\updatepi{\field{winner}}{\const{B}}{\pi} \leftrightarrow \updatepi{\field{winner}}{\const{B}}{\pi'})	
\end{align*}
The formula states that two executions $\pi$ and $\pi'$ choose the same winner.

\myparagraph{Determinism}
We state that the winner of the election is determined by the sequence of votes received.
\begin{align*}%
	\begin{split}	
	&\forall \pi \ldot \forall \pi' \ldot \mathit{sameWinner}(\pi, \pi') \W \\
	& \quad \big[(\method{voteA}_\pi \nleftrightarrow \method{voteA}_{\pi'}) \lor (\method{voteB}_\pi \nleftrightarrow \method{voteB}_{\pi'})\big]
	\end{split}
\end{align*}
The formula states that for any two executions, as long as they receive the exact same votes, they should agree on who is the winner.
Alternatively, we could have stated a local version of determinism using a \texttt{greater} predicate.
Instead of writing \texttt{greater} \field{votesA} \field{votesB}, we use the better readable infix notation with $>$.
Note that the predicate is uninterpreted at this stage altough we intend it to be implemented as expected.
We express that the choice of the winner is strictly determined by the current evaluation of the $>$ predicate on the votes and the last vote.
\begin{align}
	\label{spec:localdeterminism}
	\begin{split}
			&\forall \pi \ldot \forall \pi' \ldot \G \big[ \\
			& \quad ((\field{votesA} > \field{votesB})_\pi \leftrightarrow (\field{votesA} > \field{votesB})_{\pi'} )\\
			& \quad \land ((\field{votesB} > \field{votesA})_\pi \leftrightarrow (\field{votesB} > \field{votesA})_{\pi'} ) \\
			& \quad \land (\method{voteA}_\pi \leftrightarrow \method{voteA}_{\pi'}) \land (\method{voteB}_\pi \leftrightarrow \method{voteB}_{\pi'}) \\%
			& \quad \rightarrow \mathit{sameWinner}(\pi, \pi') \big]
	\end{split}
\end{align}
This formula states that if \texttt{A} and \texttt{B} received the same number of votes, the winner must be the same on both executions.
Note that the formula implicitly entails that any other predicate/input does not influence the winner. For example, if there is a predicate term \texttt{sender = \const{owner}} testing whether the caller of the method is the contract's owner, this should not influence the \field{winner} field.
This formulation of determinism additionally entails that the past of an execution cannot influence the choice of the winner.

In \htslp, we can express determinism by relating executions with predicates. Instead of abstracting from the concrete number of votes using the $>$ predicate, we could state local determinism as follows. As before, the $=$ predicate used in the formula is uninterpreted.
\begin{align*}
	&\forall \pi \ldot \forall \pi' \ldot \\
	& \quad \G \big[ \field{votesA}_\pi = \field{votesA}_{\pi'}\land \field{votesB}_\pi = \field{votesB}_{\pi'} \\
	& \qquad \rightarrow \mathit{sameWinner}(\pi, \pi') \big]
\end{align*}

\myparagraph{Symmetry}
A prominent fairness condition in systems with multiple agents is \emph{symmetry}, which requires that agents are treated symmetrically. In the voting case, this means, e.g., that if two traces swap the votes for \texttt{A} and \texttt{B}, then the winner must also be swapped.
\begin{align}
	\label{spec:symmetry}
	\begin{split}
	&\forall \pi \ldot \forall \pi' \ldot \big[ (\updatepi{\field{winner}}{\const{A}}{\pi} \leftrightarrow \updatepi{\field{winner}}{\const{B}}{\pi'}) \\
	& \qquad \land (\updatepi{\field{winner}}{\const{B}}{\pi} \leftrightarrow \updatepi{\field{winner}}{\const{A}}{\pi'}) \big] \\
	& \quad \W \big[(\method{voteA}_\pi \nleftrightarrow \method{voteB}_{\pi'}) \lor (\method{voteB}_\pi \nleftrightarrow \method{voteA}_{\pi'})\big]
	\end{split}
\end{align}
The fairness property implies that in case of a tie, the winner cannot be brute-forced to always be candidate \texttt{A} or candidate \texttt{B}.

\myparagraph{No harm}
A typical monotonicity criterion in elections states that a vote for the currently leading candidate should not harm the candidate.
Translated to our context this means that a vote for candidate \texttt{A} does not lead to \texttt{B} being the winner instead of \texttt{A}. We formulate this condition as follows in \htslm.
\begin{align}
	\label{spec:noharm}
	\begin{split}
	&\forall \pi \ldot \forall \pi' \ldot \big( \bigwedge_{x \in \predTerms} x_\pi \leftrightarrow x_{\pi'} \big) \, \U \bigg[\X \G \big( \bigwedge_{x \in \predTerms} x_\pi \leftrightarrow x_{\pi'} \big) \\
	& \land \method{voteA}_\pi \land \method{voteB}_{\pi'} \land \big(\bigwedge_{x \in (\predTerms)\backslash \set{\method{voteA}, \method{voteB}}} x_\pi \leftrightarrow x_{\pi'}  \big) \bigg]\\
	& \rightarrow \G (\updatepi{\field{winner}}{\const{A}}{\pi'} \rightarrow \updatepi{\field{winner}}{\const{A}}{\pi})
	\end{split}
\end{align}
The formula describes two executions on which all predicate terms evaluate to the same value except for one point in time at which the first execution receives a vote for \texttt{A} while the second receives a vote for \texttt{B}. Then, \texttt{A} is the winner in the first trace whenever \texttt{A} is in the second one.

\myparagraph{Elections with multiple candidates}
\htslp can specify hyperproperties in contracts with an unknown number of candidates. Assume that instead of \method{voteA} and \method{voteB}, the contract has a single method \method{vote} with a parameter \texttt{cand} indicating which candidate the caller is voting for. We model \method{vote} and \texttt{cand} as inputs. Now, the \field{winner} field might range over a domain of unknown size. We express determinism as follows.
\begin{align*}
	\forall \pi \ldot \forall \pi' \ldot \G \big[ & (\field{winner}_\pi = \field{winner}_{\pi'} )\\
	& \W (\neg (\texttt{cand}_\pi = \texttt{cand}_{\pi'})) \big]
\end{align*}


\section{HyperTSL Synthesis via Reductions}
In this section, we define the synthesis problem of \htslm and argue why \htslm is better suited for synthesis than \htslp.
\htslm inherits the undecidable realizability problem from TSL.
We therefore show that $\forall^*$ \htslm synthesis can be soundly approximated by HyperLTL, i.e, a strategy for the HyperLTL approximation can be translated to a strategy in \htslm.
The $\exists^*$ fragment of \htslm synthesis can be approximated by LTL satisfiability to prove unrealizability.

\begin{definition}
	A \htslm formula $\phi$ is \emph{realizable} iff there exists a strategy $\sigma : (2^{\predTerms})^+ \rightarrow \cellAssignments$ such that for every interpretation $\assign: \predSymbols \rightarrow \functions$ the set constructed from all input streams satisfies $\varphi$, i.e.,  $\set{(\sigma(\iota), \iota) \mid \iota \in \inputStream} \TSLSatisfaction{\assign} \varphi$.
\end{definition}

\begin{figure}[t] 
	\tikzstyle{treenode} = [text width = 3em, inner sep=3pt, text centered]
	\begin{tikzpicture}[
	semithick,
	level/.style={level distance=5cm},
	level 1/.style={sibling distance=3.8cm, level distance=0.8cm},
	level 2/.style={sibling distance=1.9cm, level distance=1.6cm}
	]
	
	\node[treenode] (A) {$\epsilon$}
	child {
		node[treenode] {$\sigma(\emptyset)$}        
		child {
			node[treenode] {$\sigma(\emptyset \emptyset)$ \\ $\vdots$}
			edge from parent
			node[left] {$\emptyset$}
		}
		child {
			node[treenode] {$\sigma(\emptyset \setpx)$ \\ $\vdots$}
			edge from parent
			node[right] {$\setpx$}
		}
		edge from parent 
		node[above left] {$\emptyset$}
	}
	child {
		node[treenode] (B) {$\sigma(\setpx)$}        
		child {
			node[treenode] (C) {$\sigma(\setpx \emptyset)$ \\ $\vdots$}
			edge from parent
			node[left] {$\emptyset$}
		}
		child {
			node[treenode] {$\sigma(\setpx \setpx)$ \\ $\vdots$}
			edge from parent
			node[right] {$\setpx$}
		}
		edge from parent         
		node[above right] {$\setpx$}
	};			
		
	\end{tikzpicture}
	\caption{The strategy tree for a \htslm strategy $\sigma$ branching on a single predicate term $\predTerms = \setpx$.}
	\label{fig:HTSLM-Strategy}
\end{figure}

The above definition defines the realizability problem of \htslm by generalizing the TSL definition, similarly to how HyperLTL generalizes LTL. The strategy has the same type as a TSL strategy, i.e., it generates an execution by reacting to the predicate evaluations for the current input stream. The resulting set of executions must satisfy the \htslm formula. A visualization of the definition can be found in \Cref{fig:HTSLM-Strategy}. 

In \htslp, formulas contain predicates that relate multiple executions. However, a system can still only react to the inputs it receives, i.e., it cannot include the evaluation of such predicates in its decision making. A definition of the realizability problem could therefore only contain predicates not ranging over multiple executions. This makes many formulas unrealizable or only realizable by trivial strategies most likely not intended by the developer. Consider the two \htslp formulas from \Cref{subsec:votingProtocol} again.
In the first formula, the evaluation of the terms $\field{votesA}_\pi = \field{votesA}_{\pi'}$ and $\field{votesB}_\pi = \field{votesB}_{\pi'}$ depends on the chosen executions $\pi$ and $\pi'$ and can therefore not be known by the system. 
The predicate symbol $=$ is uninterpreted and could be implemented with any binary predicate.
A strategy must be winning for all interpretations.
The only winning strategy is therefore the trivial strategy to set \field{winner} to either always \const{A} or always \const{B}.
We observe a similar phenomenon for the second formula. Since $=$ is an uninterpreted predicate, the only winning strategy in this case is to always copy the value of \texttt{cand} to \field{winner}.

The above observations make \htslm the better candidate to extend TSL synthesis to hyperproperties. This is only the case for synthesis, though, and as long as predicates and functions remain uninterpreted. 
If we considered \htslp synthesis with the theory of equality, much more meaningful strategies would be possible. \htslp might also be useful when performing model checking, where formulas can contain predicates that are not part of the system. TSL synthesis with theories has only been studied in one work so far~\cite{DBLP:journals/corr/abs-2108-00090} and is therefore out of the scope of this paper. TSL model checking has not been studied at all so far. Both directions will be exciting future work.

\subsection{Realizability of $\forall^*$ \htslm}
We show that the fact that TSL synthesis can by approximated by LTL synthesis~\cite{FinkbeinerKPS19} carries over to $\forall^*$ \htslm. To do so we recap the approximation of TSL in LTL~\cite{FinkbeinerKPS19}.
The main idea of the approximation is to forget the fact that predicates must evaluate to the same value if evaluated on cells that hold the same value. Then, every predicate $\predTerm$ and update term $\update{\cell{x}}{\funcTerm}$ can be replaced with an atomic proposition $a_{\predTerm}$ and $a_{\cell{x}\_\mathit{to}\_\funcTerm}$, respectively, which is denoted by $\mathit{syntacticConversion}(\psi)$ for a given TSL formula $\psi$. For example, $\mathit{syntacticConversion}(\G (\texttt{p}~\cell{x} \rightarrow \update{\cell{x}}{\texttt{f}~\cell{x}})$ translates to $\G(a_{p\_\cell{x}} \rightarrow a_{\cell{x}\_\mathit{to}\_\texttt{f}\_\cell{x}})$. We call the set of atomic propositions obtained in that translation $\ap_\psi$. Additionally, we ensure that in each step, exactly one update proposition holds for each cell:
\[
	\mathit{cellProps}(\varphi) := \Globally \bigg[ \bigwedge_{\cell{c} \in \cells} ~ \bigvee_{\tau \in \mathcal{T}^{\cell{c}}_{U}} \big( a_\tau \land \bigwedge_{\tau' \in \mathcal{T}^{\cell{c}}_{U} \setminus \{\tau\} } \neg a_{\tau'} \big) \bigg]
\]
where $\updateTerms^\cell{c}$ is the set of all update terms of $\cell{c}$ and $a_\tau$ is the atomic proposition associated with term $\tau$. Now we define
\[
	\transl{\psi} \coloneqq \mathit{syntacticConversion}(\psi) \land \mathit{cellProps}(\psi)
\]
Given an execution $e = (\computation, \iota) \in \compStream \times \inputStream$ and an interpretation $\assign$, we define the corresponding LTL trace $\translT{e}$ over $\ap_\psi$.
\begin{align*}
	\translT{e}[\timePoint] \coloneqq &\set{a_{\predTerm} \mid \eval(\computation, \iota, \timePoint, \predTerm)} \, \cup \\
	 &\set{a_{\cell{x}\_\mathit{to}\_\funcTerm} \mid \computation~\timePoint~\cell{x} \equiv \funcTerm}
\end{align*}

Theorem 1 in~\cite{FinkbeinerKPS19} shows that for a TSL formula $\psi$, if $\transl{\psi}$ is realizable, then $\psi$ is realizable. Actually, their proof shows the following stronger result. 

\begin{proposition}[Proof of Theorem 1 in \cite{FinkbeinerKPS19}]
	\label{prop:tsl_ltl}
	Let $\psi$ be a TSL formula.
	For any execution $e$, any interpretation $\assign$, and any point in time $\timePoint$, $e, \timePoint \TSLSatisfaction{\assign} \psi$ iff $\translT{e}, \timePoint \models \transl{\psi}$.	
\end{proposition}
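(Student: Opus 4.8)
The plan is to proceed by structural induction on the TSL formula $\psi$, proving the biconditional simultaneously for all points in time $\timePoint$; the universal quantification over $\timePoint$ in the statement is precisely the induction invariant one needs, since the temporal operators shift the evaluation point. First I would split $\transl{\psi}$ into its two conjuncts $\mathit{syntacticConversion}(\psi)$ and $\mathit{cellProps}(\psi)$, and argue that on the particular trace $\translT{e}$ the second conjunct is always satisfied, so that the problem reduces to showing $e, \timePoint \TSLSatisfaction{\assign} \psi$ iff $\translT{e}, \timePoint \models \mathit{syntacticConversion}(\psi)$. The reason $\mathit{cellProps}(\psi)$ holds is that at each step $\timePoint$ and for each cell $\cell{c}$ the computation $\computation[\timePoint]$ assigns exactly one function term $\funcTerm = \computation[\timePoint]\,\cell{c}$, so by the definition of $\translT{e}$ exactly one update proposition $a_{\cell{c}\_\mathit{to}\_\funcTerm}$ is present at that position (I return below to the subtlety this hides).

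For the base cases I would unfold the definition of $\translT{e}$ directly. For a predicate term $\predTerm$, we have $\mathit{syntacticConversion}(\predTerm) = a_{\predTerm}$, and $a_{\predTerm} \in \translT{e}[\timePoint]$ holds exactly when $\eval(\computation, \iota, \timePoint, \predTerm)$, which is verbatim the TSL satisfaction condition; note that both sides are evaluated at the same fixed interpretation $\assign$, which enters the TSL side through $\eval$ and the trace side through the definition of $\translT{e}$, so the predicate propositions present in $\translT{e}[\timePoint]$ are by construction exactly those whose TSL evaluation is $\true$. For an update term $\update{\cell{c}}{\funcTerm}$, the proposition $a_{\cell{c}\_\mathit{to}\_\funcTerm}$ lies in $\translT{e}[\timePoint]$ iff $\computation[\timePoint]\,\cell{c} \equiv \funcTerm$, again matching the TSL clause. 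The inductive cases for $\neg$, $\land$, $\X$, and $\U$ then close by the induction hypothesis, because $\mathit{syntacticConversion}$ acts as a homomorphism on the Boolean and temporal structure and each operator's LTL semantics mirrors its TSL semantics; for $\X$ and $\U$ one simply invokes the hypothesis at the appropriately shifted time points, which is legitimate since the statement was set up to hold for all $\timePoint$.

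The only place the argument is more than bookkeeping is the treatment of $\mathit{cellProps}(\psi)$. Its ``exactly one update per cell'' guarantee on $\translT{e}$ requires that the term $\computation[\timePoint]\,\cell{c}$ chosen by the computation is itself one of the update terms of $\psi$ (equivalently, that $\translT{e}$ is built only from propositions of $\ap_\psi$); otherwise no update proposition for $\cell{c}$ would hold, $\mathit{cellProps}(\psi)$ would fail at that position, and the biconditional could break for a $\psi$ that does not mention the offending update. I would therefore make explicit the mild assumption that computations range only over the update terms occurring in $\psi$ — which is exactly the class of computations produced by TSL strategies $\sigma : (2^{\predTerms})^+ \rightarrow \cellAssignments$ — under which $\transl{\psi}$ and $\mathit{syntacticConversion}(\psi)$ agree on $\translT{e}$. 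Combining this observation with the base and inductive cases yields the claimed equivalence, and I expect this interaction with $\mathit{cellProps}(\psi)$, rather than the routine induction itself, to be the step demanding the most care.
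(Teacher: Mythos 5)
Your proof is correct, but note a structural point first: the paper never proves this proposition itself --- it is imported as what ``the proof of Theorem 1 in \cite{FinkbeinerKPS19}'' establishes --- so your induction is a reconstruction of the cited argument rather than an alternative to anything written in this paper. As a reconstruction it is the right one: split off $\mathit{cellProps}(\psi)$ and discharge it on the particular trace $\translT{e}$, reduce the claim to $\mathit{syntacticConversion}(\psi)$, and run a structural induction with the time point universally quantified so that the $\X$ and $\U$ cases can shift the evaluation point; the base cases hold definitionally because $\translT{e}[\timePoint]$ contains exactly the propositions $a_{\predTerm}$ whose TSL evaluations under $\assign$ are true and exactly the propositions $a_{\cell{c}\_\mathit{to}\_\funcTerm}$ with $\computation[\timePoint]~\cell{c} \equiv \funcTerm$.

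The part of your write-up that goes beyond bookkeeping is also the part that earns its keep. You are right that $\mathit{cellProps}(\psi)$ holds on $\translT{e}$ only if every term $\computation[\timePoint]~\cell{c}$ assigned by the computation is itself one of the update terms contributing to $\ap_\psi$: a computation assigning some cell a term outside $\ap_\psi$ yields a trace with no update proposition for that cell at that position, $\mathit{cellProps}(\psi)$ fails there, and the TSL-to-LTL direction of the biconditional breaks even though $e$ may satisfy $\psi$. So the proposition's ``for any execution $e$'' must be read with the implicit convention you make explicit, namely that computations range over the update terms of the formula. This is exactly how the proposition is used downstream: in \Cref{lem:equivalence} and \Cref{thm:forallHTSL} the relevant executions are generated by strategies that, by construction, only emit function terms $\funcTerm$ for which $\transl{\update{\cell{c}}{\funcTerm}}$ is a proposition of the LTL/HyperLTL alphabet. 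One small caveat on your phrasing: since the paper defines a TSL strategy as mapping into all of $\cellAssignments$, it is not literally true that every TSL strategy produces only formula terms; the restriction is satisfied by the strategies constructed in the approximation arguments, which is all that is needed, so it is cleaner to state it as an assumption on the executions (as you do) than as a property of TSL strategies in general.
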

This results entails that realizability of $\transl{\psi}$ implies realizability of $\psi$; the opposite direction does not hold.
This is because the semantics of TSL forbids \emph{spurious} behavior, which is not captured by the approximation in LTL.
For example, if we have $\update{\cell{x}}{\cell{y}}$ at some point in time, $\texttt{p}(\cell{x})$ will be equivalent to $\texttt{p}(\cell{y})$ but $a_{\texttt{p}\_\cell{x}}$ and $a_{\texttt{p}\_\cell{y}}$ are different atomic propositions with possibly different truth values assigned to them.
Therefore, every execution $e$ can be mapped to a trace $\translT{e}$, but there are traces $t$ over $\ap_\psi$ such that there is no execution $e$ with $\translT{e} = t$.
The approximation is still very valuable, because strategies found for the LTL approximations carry over to strategies for the TSL specification. We show that this result can be lifted to $\forall^*$ \htslm and $\forall^*$ HyperLTL. Given a set of executions $\exSet$, we set $\translT{\exSet} = \set{\translT{\execution} \mid \execution \in \exSet}$ and also lift $\transl{\cdot}$ to \htslm formulas by setting $\transl{\tau_\pi} = (\transl{\tau})_\pi$ with $\tau$ being either a predicate term or an update term.

\begin{lemma}
	\label{lem:equivalence}
	Let $\varphi$ be a \htslm formula. For any set of executions $\exSet$ and any interpretation $\assign$, $\exSet \TSLSatisfaction{\assign} \varphi$ iff $\translT{\exSet} \models \transl{\varphi}$.
\end{lemma}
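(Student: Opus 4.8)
The plan is to prove a strengthened statement by structural induction on the \htslm formula $\varphi$ and then recover the lemma as the special case of the empty assignment at time $0$. Since the matrix of a \htslm formula is quantifier-free but may still carry temporal operators, I first generalize the claim to arbitrary trace assignments and time points: for every subformula $\varphi$, every assignment $\traceAssign$ of its free execution variables to executions in $\exSet$, every interpretation $\assign$, and every $\timePoint$,
\[
\traceAssign, \exSet, \timePoint \TSLSatisfaction{\assign} \varphi \quad\text{iff}\quad \hat\traceAssign, \translT{\exSet}, \timePoint \models \transl{\varphi},
\]
where $\hat\traceAssign$ is the HyperLTL trace assignment $\pi \mapsto \translT{\traceAssign(\pi)}$. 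The lemma is then the instance $\traceAssign = \emptyset$, $\timePoint = 0$.

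For the base cases I would treat the indexed atoms $\predTerm_\pi$ and $\updatepi{\cell{c}}{\funcTerm}{\pi}$. Writing $\computation = \projcomp{\traceAssign(\pi)}$ and $\iota = \projin{\traceAssign(\pi)}$, the \htslm semantics makes $\predTerm_\pi$ hold iff $\eval(\computation,\iota,\timePoint,\predTerm)$ and $\updatepi{\cell{c}}{\funcTerm}{\pi}$ hold iff $\computation~\timePoint~\cell{c} \equiv \funcTerm$. By the definition of $\translT{(\computation,\iota)}$, these are exactly the conditions under which $a_\predTerm$, resp. $a_{\cell{c}\_\mathit{to}\_\funcTerm}$, lies in $\translT{\traceAssign(\pi)}[\timePoint]$, which, using $\transl{\tau_\pi} = (\transl{\tau})_\pi$, is precisely the HyperLTL satisfaction of the indexed atom $(\transl{\predTerm})_\pi$, resp. $(\transl{\update{\cell{c}}{\funcTerm}})_\pi$. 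This is the single-trace atomic correspondence underlying \Cref{prop:tsl_ltl}; note that I cannot apply \Cref{prop:tsl_ltl} to the whole matrix at once, since the body may mix atoms referring to different execution variables, so I reduce only the atoms (each concerning one trace) and then induct through the structure. The Boolean and temporal cases $\neg$, $\land$, $\X$, $\U$ are routine: $\transl{\cdot}$ commutes with these connectives, both satisfaction relations treat them identically, and the induction hypothesis is applied at the shifted, resp. quantified, time points.

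The interesting cases are the quantifiers, which I expect to be the only genuinely new ingredient beyond \Cref{prop:tsl_ltl}. Let $h : \exSet \to \translT{\exSet}$ be $\execution \mapsto \translT{\execution}$; by definition $\translT{\exSet} = h(\exSet)$, so $h$ is surjective, but it need \emph{not} be injective, as distinct executions may induce the same LTL trace. For $\forall \pi \ldot \varphi$, the induction hypothesis gives $\traceAssign[\pi \mapsto \execution], \exSet, \timePoint \TSLSatisfaction{\assign} \varphi$ iff $\hat\traceAssign[\pi \mapsto h(\execution)], \translT{\exSet}, \timePoint \models \transl{\varphi}$, and I must match $\forall \execution \in \exSet$ on the left with $\forall t \in \translT{\exSet}$ on the right. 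The key observation is that the right-hand HyperLTL condition depends on $\execution$ only through $h(\execution)$, so letting $\execution$ range over $\exSet$ lets $h(\execution)$ range over all of $h(\exSet) = \translT{\exSet}$; surjectivity thus yields the equivalence despite non-injectivity, and the dual argument handles $\exists \pi$.

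Finally, I would note that the $\mathit{cellProps}$ conjunct introduced by $\transl{\cdot}$ causes no difficulty: every trace of the form $\translT{\execution}$ contains, for each cell and each step, exactly one update proposition by construction from a computation, so $\mathit{cellProps}$ is automatically satisfied by all traces in $\translT{\exSet}$ and can be discharged uniformly. Everything beyond the quantifier/surjectivity step is a direct lift of the single-trace correspondence through the quantifier prefix.
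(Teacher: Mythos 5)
Your proof is correct and follows essentially the same route as the paper's: both rest on the atomic correspondence underlying \Cref{prop:tsl_ltl} lifted through the quantifier-free body by induction, plus the surjectivity of $\execution \mapsto \translT{\execution}$ to match quantifiers over $\exSet$ with quantifiers over $\translT{\exSet}$ (the paper phrases this as an invariant maintained along the prenex prefix, you as a strengthened induction hypothesis over arbitrary assignments and time points, but the argument is the same). Your explicit discharge of the $\mathit{cellProps}$ conjunct is a detail the paper glosses over, and a welcome one.
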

\begin{proof}
	Let $\varphi = (\exists^*\forall^*)^k \ldot \psi$ be a \htslm formula consisting of $k$ blocks of $\exists^*\forall^*$ quantifiers.
	For the first direction, let $E \TSLSatisfaction{\assign} \varphi$. We show $\translT{\exSet} \models \transl{\varphi}$.
	Let $\varphi[i]$ be the subformula of $\varphi$ starting from the $i$th quantifier block with $1 \leq i \leq k$.
	We keep the invariant to only construct trace assignments $\Pi$ such that there exists an execution assignment $\hat{\Pi}$ with $\Pi(\pi) = \translT{\hat{\Pi}(\pi)}$ and $\hat{\Pi}, \exSet, 0 \TSLSatisfaction{\assign} \varphi[i]$. When choosing the witness traces for the existential variables $\pi^{i+1}_1, \ldots \pi^{i+1}_n$ of the $(i+1)$th quantifier block, we choose $\Pi(\pi^{i+1}_j) = \translT{e^{i+1}_j}$ where $e^{i+1}_j$ is assigned to $\pi^{i+1}_j$ by the proof of satisfaction of $\varphi[i]$ with respect to the current execution assignment $\hat{\Pi}$. Now, by \Cref{prop:tsl_ltl} and a simple induction over the structure of $\psi$, we get $\Pi, \translT{\exSet}, 0 \models \psi$.
	The argument is similar for the other direction since for every $t \in \translT{\exSet}$, there exists an $\execution \in \exSet$ such that $\translT{\execution} = t$.
\end{proof}

Note that the above lemma does not extend to \htslp as predicates over multiple executions cannot be mapped to HyperLTL. From the lemma we get the following theorem. We will also reuse the lemma in \Cref{sec:SCSynthesis}. 

\begin{theorem}
	\label{thm:forallHTSL}
	Let $\varphi$ be a $\forall^*$ \htslm formula. If $\transl{\varphi}$ is realizable, then $\varphi$ is realizable.
\end{theorem}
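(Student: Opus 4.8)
The plan is to reduce the realizability of the $\forall^*$ \htslm formula $\varphi$ to the realizability of its HyperLTL approximation $\transl{\varphi}$ by transporting a winning strategy across the reduction. Since $\transl{\varphi}$ is realizable, there exists a HyperLTL strategy $\sigma_L : (\pow{\apin})^+ \to \pow{\apout}$ with $\mathit{traces}(\sigma_L) \models \transl{\varphi}$, where the inputs and outputs correspond to the predicate propositions $a_{\predTerm}$ and the update/cell propositions $a_\tau$ respectively. The goal is to build from $\sigma_L$ a \htslm strategy $\sigma : (2^{\predTerms})^+ \to \cellAssignments$ that is winning for $\varphi$ against every interpretation $\assign$.

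\textbf{Construction of the \htslm strategy.} First I would fix the correspondence between the two strategy types. A \htslm strategy reads a history of sets of predicate terms; each such set corresponds exactly to an input valuation over $\{a_{\predTerm} \mid \predTerm \in \predTerms\}$, so feeding this history into $\sigma_L$ produces an output valuation over the update propositions. By the constraint $\mathit{cellProps}(\varphi)$ baked into $\transl{\varphi}$, this output selects, for each cell $\cell{c}$, exactly one update term $\tau \in \updateTerms^{\cell{c}}$; I define $\sigma$ at that step to assign this $\tau$ to $\cell{c}$, yielding a well-defined cell assignment in $\cellAssignments$. This makes $\sigma$ a faithful ``decoding'' of $\sigma_L$: for every input stream $\iota$ and interpretation $\assign$, the execution $(\sigma(\iota),\iota)$ satisfies $\translT{(\sigma(\iota),\iota)} = (\sigma_L)_w$ for the corresponding input word $w$ (the word whose $j$th letter records which predicate terms hold at step $j$ under the computation $\sigma(\iota)$ and interpretation $\assign$). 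The reason the two traces coincide is that $\sigma$ was defined precisely so that its induced update propositions match $\sigma_L$'s output, while the predicate propositions on both sides are read off from the same evaluation of predicate terms via $\eta_{\assign}$.

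\textbf{Transporting satisfaction via \Cref{lem:equivalence}.} Fix an arbitrary interpretation $\assign$ and let $\exSet = \set{(\sigma(\iota),\iota) \mid \iota \in \inputStream}$ be the execution set generated by $\sigma$. By the pointwise correspondence just established, $\translT{\exSet} \subseteq \mathit{traces}(\sigma_L)$, and in fact every trace in $\mathit{traces}(\sigma_L)$ of the form $(\sigma_L)_w$ that is \emph{consistent} (i.e., actually arises from some execution) equals $\translT{\execution}$ for the corresponding $\execution \in \exSet$. Because $\varphi$ is a $\forall^*$ formula, $\transl{\varphi}$ is a $\forall^*$ HyperLTL formula; universal satisfaction is downward closed under taking subsets of the trace set, so from $\mathit{traces}(\sigma_L) \models \transl{\varphi}$ we obtain $\translT{\exSet} \models \transl{\varphi}$. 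Applying \Cref{lem:equivalence} then gives $\exSet \TSLSatisfaction{\assign} \varphi$. Since $\assign$ was arbitrary, $\sigma$ is winning for every interpretation, which is exactly the definition of realizability of $\varphi$.

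\textbf{The main obstacle.} The delicate point is the downward-closure step together with the spuriousness gap noted after \Cref{prop:tsl_ltl}: $\mathit{traces}(\sigma_L)$ may contain spurious traces that are not of the form $\translT{\execution}$ for any execution, and conversely $\exSet$'s image $\translT{\exSet}$ is only a subset of $\mathit{traces}(\sigma_L)$. I must argue that restricting from $\mathit{traces}(\sigma_L)$ to $\translT{\exSet}$ preserves satisfaction — this is where the $\forall^*$ restriction is essential, since a universal formula true on a larger set remains true on any subset, whereas this would fail for existential quantifiers (the witness trace might be spurious). I would make explicit that $\sigma$'s decoding is well-defined only because of $\mathit{cellProps}$, and that the input/output interface of $\sigma_L$ lines up with $\predTerms$ and the update terms; getting these bookkeeping details exactly right is the crux, but the logical content is the downward closure of $\forall^*$ satisfaction combined with \Cref{lem:equivalence}.
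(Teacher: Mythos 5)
Your proposal is correct and follows essentially the same route as the paper's own proof: decode the HyperLTL strategy into a \htslm strategy (well-defined thanks to $\mathit{cellProps}$), observe that $\translT{\exSet} \subseteq \mathit{traces}(\sigma_\text{HyperLTL})$, use downwards-closure of universal properties to restrict satisfaction to $\translT{\exSet}$, and conclude via \Cref{lem:equivalence}. Your explicit discussion of the spuriousness gap and why the $\forall^*$ restriction is essential matches the paper's surrounding remarks; there is no substantive difference.
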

\begin{proof}
	Let $\ap_\varphi = \apin \cupdot \apout$ be the atomic propositions obtained in the translation $\transl{\varphi}$, where $\apin$ are the propositions generated for predicate terms and $\apout$ the propositions for update terms, respectively.
	Let $\sigma_\text{HyperLTL} : (\pow{\apin})^+ \rightarrow \pow{\apout}$ be the realizing strategy for $\transl{\varphi}$. We claim that the following strategy $\sigma$ realizes $\varphi$.
	\begin{align*}
		&\sigma~ (P_1 \cdot \ldots \cdot P_i) ~\cell{c} \coloneqq \funcTerm \\
		&\text{s.t. } \transl{\update{\cell{c}}{\funcTerm}} \in \sigma_\text{HyperLTL}(\transl{P_1} \cdot \ldots \cdot \transl{P_i})
	\end{align*}
	Here, we lift $\transl{\cdot}$ to sets by $\transl{P_j} = \set{\transl{\predTerm} \mid \predTerm \in P_j}$.
	By definition of $\mathit{cellProps}$, the function term $\funcTerm$ is unique for every $i$ and every cell $\cell{c}$.
	We show that $\sigma$ realizes $\varphi$. 
	Let $\assign$ be any interpretation. Let $\exSet = \set{(\sigma(\iota), \iota) \mid \iota \in \inputStream}$ be the set of executions obtained from $\sigma$ and $T = \mathit{traces}(\sigma)$ be the set of traces obtained from $\sigma_\text{HyperLTL}$.
	Notice that $\translT{E} \subseteq T$, but $T$ may not be a subset of $\translT{E}$. Since $T \models \transl{\varphi}$ and since universal properties are downwards-closed, also $\translT{E} \models \transl{\varphi}$. Therefore, by \Cref{lem:equivalence} we have $\exSet \TSLSatisfaction{\assign} \varphi$.	
\end{proof}

This theorem approximates $\forall^*$ \htslm synthesis by $\forall^*$ HyperLTL synthesis. Even for HyperLTL, this problem is undecidable~\cite{DBLP:journals/acta/FinkbeinerHLST20}. However, there exists an implementation of bounded synthesis for $\forall^*$ HyperLTL in the tool BoSyHyper~\cite{DBLP:journals/acta/FinkbeinerHLST20}. This approach searches for smallest systems implementing the formula. The efficient reduction from $\forall^*$ \htslm to HyperLTL opens the door to apply BoSyHyper to \htslm.

\subsection{Realizability of $\exists^*$ \htslm}

\Cref{thm:forallHTSL} uses the downwards closedness of $\forall^*$ HyperLTL properties. This is necessary since a strategy for \htslm produces ``fewer'' traces than a strategy for the translation of the formula to HyperLTL does. Consequently, the proof does not extend to \htslm formulas with existential quantifiers. We show how to translate the synthesis problem of $\exists^*$ \htslm formulas to a TSL satisfiability problem, which we can approximate by LTL satisfiability checking using the translation from TSL to LTL from~\cite{FinkbeinerKPS19}. In contrast to the case of $\forall^*$ HyperTSL formulas, the approximation of existential properties in LTL can be used to show unrealizability instead of realizability.

\begin{theorem}
	Given an $\exists^*$ \htslm formula $\varphi$, there exists a TSL formula $\psi_\text{TSL}$ such that $\varphi$ is realizable iff $\psi$ is satisfiable.
\end{theorem}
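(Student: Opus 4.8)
The plan is to reduce realizability of the purely existential formula $\varphi = \exists \pi_1 \ldots \exists \pi_n \ldot \psi$ to satisfiability of a single TSL formula obtained by \emph{self-composition}. Concretely, I would take $n$ disjoint copies of the cells $\cells$ and inputs $\inputs$, one per quantified variable, and rewrite $\psi$ into a quantifier-free TSL formula $\psi_\text{TSL}$ by replacing every predicate term $\predTerm_{\pi_j}$ and every update $\updatepi{\cell{c}}{\funcTerm}{\pi_j}$ with the corresponding term over the $j$-th copy. This is well defined precisely because in \htslm every predicate and update term refers to a \emph{single} execution, so after renaming no term mixes copies and $\psi_\text{TSL}$ is an ordinary TSL formula. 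A single execution of the composed system then corresponds to an $n$-tuple of executions of the original system, and vice versa.

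The forward direction (realizable $\Rightarrow$ satisfiable) is the easy one. If $\sigma$ realizes $\varphi$, I fix an arbitrary interpretation $\assign$ and consider $\exSet = \set{(\sigma(\iota), \iota) \mid \iota \in \inputStream}$. Since $\exSet \TSLSatisfaction{\assign} \exists \pi_1 \ldots \exists \pi_n \ldot \psi$, there are witnessing executions $e_1, \ldots, e_n \in \exSet$ satisfying $\psi$. Interleaving their computations and input streams into a single execution of the composed system yields an execution that satisfies $\psi_\text{TSL}$ under $\assign$, so $\psi_\text{TSL}$ is satisfiable.

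For the backward direction (satisfiable $\Rightarrow$ realizable) I would start from a satisfying execution $e_0$ and interpretation $\assign_0$ of $\psi_\text{TSL}$, read off the $n$ witness computations, and build an \emph{oblivious} strategy $\sigma$ that reproduces these computations on $n$ designated input streams. The satisfaction of $\psi$ on the resulting tuple is then governed only by its predicate-evaluation and update pattern, which I would track through the LTL abstraction $\translT{\cdot}$ and \Cref{prop:tsl_ltl}: it suffices that the induced trace agrees with $\translT{e_0}$ on the relevant atomic propositions.

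The main obstacle is the mismatch in how the interpretation is quantified: realizability requires a single strategy that succeeds for \emph{every} interpretation $\assign$, whereas satisfiability provides only \emph{one} interpretation $\assign_0$. The heart of the argument is therefore to show that the witness transfers across interpretations, i.e., that for each adversarial $\assign$ the strategy can still be fed input streams inducing a predicate-evaluation pattern that makes $\psi$ hold on some $n$-tuple of its executions. I expect this to be the delicate step, and it is exactly where the construction of $\psi_\text{TSL}$ must be engineered so that the satisfying pattern is reproducible uniformly rather than being pinned to $\assign_0$; the uninterpreted treatment of predicate symbols together with \Cref{prop:tsl_ltl} is what I would use to make this transfer go through. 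Once the equivalence is established, \Cref{prop:tsl_ltl} additionally lets us over-approximate satisfiability of $\psi_\text{TSL}$ by LTL satisfiability of $\transl{\psi_\text{TSL}}$, so that unsatisfiability of the LTL formula certifies unrealizability of $\varphi$.
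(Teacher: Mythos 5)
Your reduction is missing the key ingredient of the paper's proof, and without it the backward direction of your equivalence is false. The naive self-composition $\psi_\text{TSL}$ only records that $n$ executions satisfying $\psi$ \emph{exist}; it does not record that they can be generated by a \emph{single} strategy. A TSL strategy $\sigma : (2^{\predTerms})^+ \rightarrow \cellAssignments$ sees nothing but the history of predicate evaluations, so any two executions it produces must perform identical updates for as long as their predicate evaluations agree. The $n$ witnesses extracted from a model of your $\psi_\text{TSL}$ need not have this property, and then no strategy (oblivious or not) can reproduce them. Concretely, take $\varphi = \exists \pi_1 \ldot \exists \pi_2 \ldot \updatepi{\cell{c}}{\texttt{f}\,\cell{c}}{\pi_1} \land \updatepi{\cell{c}}{\cell{c}}{\pi_2}$ with no predicate terms in the specification (or, if ambient predicates exist, consider an interpretation rendering them all constant). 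Your composed formula $\update{\cell{c}_1}{\texttt{f}\,\cell{c}_1} \land \update{\cell{c}_2}{\cell{c}_2}$ is satisfiable, yet $\varphi$ is unrealizable: all input streams induce the same predicate-evaluation history, so every strategy produces the same computation on all of them, and its execution set can never contain both witnesses. This is exactly why the paper first conjoins the strategy-consistency formula $\varphi_\text{strat} = \forall \pi \ldot \forall \pi' \ldot \big[\bigwedge_{\upTerm \in \updateTerms} \upTerm_\pi \leftrightarrow \upTerm_{\pi'}\big] \W \big[\bigvee_{\predTerm \in \predTerms} \predTerm_\pi \nleftrightarrow \predTerm_{\pi'}\big]$: the formula $\exists \pi_1 \ldots \exists \pi_n \ldot \psi \land \varphi_\text{strat}$ is equi-realizable with $\varphi$, the added conjunct is what makes satisfiability and realizability coincide, and since the universal quantifiers only need to range over the $n$ existential witnesses they can be unrolled into $\bigwedge_{1 \leq i,j \leq n} \psi_\text{strat}[\pi \mapsto \pi_i, \pi' \mapsto \pi_j]$. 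Only after this unrolling does your step 1 --- copying inputs and cells $n$ times --- yield a correct $\psi_\text{TSL}$.

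Your forward direction is fine and agrees with the paper, as does your closing remark that unsatisfiability of the LTL translation certifies unrealizability of $\varphi$. But note that you located the ``delicate step'' in the wrong place: the transfer of the witness from the single interpretation $\assign_0$ to all interpretations $\assign$ is indeed a subtlety that any complete proof must address (satisfaction of the quantifier-free body depends only on the predicate-evaluation and update patterns, by \Cref{prop:tsl_ltl}), but no amount of engineering of the naive composition can repair the reduction, because the failure in the counterexample above is interpretation-independent. The missing idea is the explicit axiomatization, inside the formula handed to the satisfiability checker, of the fact that the witnesses can be arranged in a strategy tree.
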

\begin{proof}
	Let $\varphi = \exists \pi_1 \ldots \exists \pi_n \ldot \psi$. We encode the fact that all traces in a model of $\varphi$ are producible by a strategy:
	\begin{align*}
		\varphi_\text{strat} \coloneqq \forall \pi \ldot \forall \pi' \ldot \left[\bigwedge_{\upTerm \in \updateTerms} \upTerm_\pi \leftrightarrow \upTerm_{\pi'}\right] \W \left[\bigvee_{\predTerm \in \predTerms} \predTerm_\pi \nleftrightarrow \predTerm_{\pi'}\right]
	\end{align*}
	The formula states that while two executions have the same predicate evaluations, they have to perform the same updates. Formula $\varphi' = \exists \pi_1 \ldots \exists \pi_n \ldot \psi \land \varphi_\text{strat}$ ensures that the executions chosen as witnesses for $\pi_1, \ldots, \pi_n$ can be arranged in a strategy tree. Thus, $\varphi$ and $\varphi'$ are equi-realizable. The additional conjunct also ensures that $\varphi'$ is satisfiable iff it is realizable. Formula $\varphi'$ is satisfiable iff it is satisfiable by $n$ executions. The $\forall$ quantifiers in $\varphi'$ quantify from these $n$ executions. We can therefore create an equi-satisfiable $\exists^n$ formula by ``unrolling'' the $\forall$ quantifiers. Let $\varphi_\text{strat} = \forall \pi \ldot \forall \pi' \ldot \psi_\text{strat}$. We define:
	\begin{align*}
		\varphi_\text{new} \coloneqq \exists \pi_1 \ldots \exists \pi_n \ldot \psi \land \bigwedge_{1 \leq i, j \leq n} \psi_\text{strat}[\pi \mapsto \pi_i, \pi' \mapsto \pi_j]
	\end{align*}
	Now we have that $\varphi$ is realizable iff $\varphi_\text{new}$ is satisfiable. The above formula is satisfiable iff a TSL formula $\psi_\text{TSL}$ is satisfiable, where we construct $\psi_\text{TSL}$ by creating $n$ copies of each input and each cell. 
\end{proof}
It is an open question whether the two approximations for $\forall^*$ HyperTSL and $\exists^*$ HyperTSL can be combined to approximate formulas with quantifier alternations. The challenge here is that universal properties are best over-approximated to obtain realizability results, whereas existential properties would need an under-approximation.

\section{Hyperproperties for Smart Contract Synthesis}
\label{sec:SCSynthesis}
We have shown that universal and existential \htslm formulas can be approximated by HyperLTL synthesis and LTL satisfiability. For the case of universal properties, this result is, as of now, mostly of theoretic interest. The realizability problem of $\forall^*$ HyperLTL is undecidable~\cite{DBLP:journals/acta/FinkbeinerHLST20}. Bounded hyperproperty synthesis is also still in its infancy.

In the following three sections, we therefore present a more feasible approach to synthesize smart contract control flows.
We exploit the fact that we do not synthesize a system from hyperproperties only, but in combination with trace properties describing the temporal behavior of the contract.
We build on a workflow recently developed for smart contract control flows from TSL specifications~\cite{scsynt} and augment it with methods for HyperTSL.
In this section, we recap the TSL approach by showing how to specify 
some of the trace properties expected of a voting protocol as described in \Cref{subsec:votingProtocol}.
We also show how to include the hyperproperties discussed in \Cref{sec:smartContracts}.
In the next section, we analyze whether the combination of hyperproperties with trace properties leads to ``pseudo'' hyperproperties. 
In \Cref{sec:repair}, we synthesize the most general system from the trace properties and resolve free choices with a repair-like algorithm.

\subsection{Recap: Smart Contract Synthesis from TSL Specifications}

It was recently demonstrated that the temporal behavior of a smart contract can be extensively specified in the syntactic safety fragment of TSL~\cite{scsynt}.
The idea of the approach is to synthesize the underlying state machine of a smart contract from a TSL specification. The state machine is then translated to Solidity code, which can then be augmented with additional functionality to obtain the desired contract.
TSL is a convenient logic to not only describe the correct order of method calls (method ``\method{vote} should not be called after \method{close} has been successfully called'') but also how fields need to be updated to ensure a correct control flow. 
This can be achieved through TSL's cells and uninterpreted functions and predicates. 
The TSL specification abstracts from the concrete implementation of functions and predicates and instead states how fields have to be updated if certain predicates hold (``if method \method{vote} is called, update the \field{winner} field, depending on whether \field{votesA} \texttt{>} \field{votesB}'').

While TSL synthesis is in general undecidable~\cite{FinkbeinerKPS19}, correct solutions can be found through approximation in LTL, which works very well in practice~\cite{FinkbeinerKPS19, DBLP:journals/corr/abs-2101-07232,DBLP:conf/haskell/Finkbeiner0PS19}. By restricting the TSL specification to the past-time fragment, synthesis can be solved comparatively fast using a BDD-based implementation in a tool called \scsynt~\cite{scsynt}.

The synthesis is integrated in a feedback loop as shown in \Cref{fig:TSL_workflow}.
\begin{figure}
		 \scalebox{0.8}{			
					\begin{tikzpicture}[shorten >=0pt,auto]
							
							\node[align=center]		(spec)	at (0,0)		{specification \\ (pastTSL)};
							
							\node[draw,align=center,minimum width=2cm,minimum height=1.2cm]				(synt)	[right=0.8cm of spec]			{BDD-based \\ synthesis};							
							\node[align=center]		(choice)	[right=0.8cm of synt]		{free \\ choices?};							
							\node[draw,align=center,minimum width=2cm,minimum height=1.2cm]		(trans)	[right=1.3cm of choice]		{translation \\ to solidity};							
							\node[draw,fill=black]		(amb)		[below=0.8cm of synt]		{};
							\node[align=center,minimum width=2cm,minimum height=1.2cm]		(impl)	[below=0.7cm of trans]		{implementation of \\ predicates/functions};					
							
							\path[-{Latex[length=2mm]}]	(spec) 	edge		node		{}	(synt)
							(synt)		edge 		node[above] {\color{darkgreen}\ding{51}} (choice)
							(choice)		edge 		node[above] {\color{darkgreen}\ding{51} / \color{red}\ding{56}} (trans)
							(impl) 	edge		node		{}	(trans);
							\draw[-{Latex[length=2mm]}]	(amb) -| node[above,pos=0.25]{\small{refine}}	(spec);
							\draw 
							(synt)		edge		node[right]	{\color{red}\ding{56}}		(amb)
							(choice) |- node[right,pos=0.25]{\color{darkgreen}\ding{51}}	(amb);
							
					\end{tikzpicture}
			 }
	\caption{Workflow for the synthesis of a smart contract control flow as developed in~\cite{scsynt}.}	
	\label{fig:TSL_workflow}
\end{figure}
The user provides a past-time TSL specification, which is translated to LTL. If the LTL formula is unrealizable, this might be either a specification error or is due to the LTL approximation (in most cases it's the former). If the specification is realizable, \scsynt returns the full \emph{winning region} -- a Mealy machine that implements the specification and includes all realizing strategies. If there is more than one admissible strategy, the specification might be incomplete and should be refined. If not, or if either strategy is fine, a strategy is extracted from the winning region. In the last step, the developer provides an implementation of the functions and predicates (such that the implementation respects the assumptions stated in the specification), and \scsynt generates a Solidity smart contract.

\scsynt specializes on a correct temporal control flow and does not check if the provided function and predicate implementations satisfy the stated assumptions. This can be checked by specialized tools based on SMT solving. For recent approaches on how to combine TSL synthesis with reasoning over theories and syntax-guided synthesis, see~\cite{DBLP:journals/corr/abs-2108-00090, tslSygus}.

\subsection{Voting Protocol}
\label{subsec:usecase}
The part of the TSL voting specification that is relevant for this paper is depicted in Specification~\ref{spec:voting}.
\begin{specification}{Functional specification of the voting contract in TSL.}{spec:voting}
	Inputs: (*\method{voteA}*), (*\method{voteB}*), (*\method{close}*), (*\method{reveal}*), sender
	Cells: (*\field{votesA}*), (*\field{votesB}*)
	Functions: addOne
	Predicates: >, =
	Constants: owner(), A(), B()
	
	--- Assumptions ---
	// Mutex on methods
	(*$\Globally$*)((*\method{voteA}*) -> ! (*\method{voteB}*) && ! (*\method{close}*));	
	(*$\Globally$*)((*\method{voteB}*) -> ! (*\method{voteA}*) && ! (*\method{close}*));
	(*$\Globally$*)((*\method{close}*) -> ! (*\method{voteA}*) && ! (*\method{voteB}*));
	(*$\Globally$*)((*\method{voteA}*) || (*\method{voteB}*) || (*\method{close}*));
	
	//Arithmetic assumptions on > predicate
	(*$\Globally$*) !((*\field{votesA}*) > (*\field{votesB}*) && (*\field{votesB}*) > (*\field{votesA}*));
	//Initial assumption on number of votes
	!((*\field{votesA}*) > (*\field{votesB}*)) && !((*\field{votesB}*) > (*\field{votesA}*));
	
	--- Requirements ---
	(*$\Globally$*)((*\method{close}*) -> sender = owner());		
	(*$\Globally$*)((*\method{close}*) -> (*$\Next \Globally$*) !(*\method{voteA}*) && !(*\method{voteB}*));
	
	--- Obligations ---
	(*$\Globally$*)((*\method{voteA}*) -> [(*\field{votesA}*) <- addOne (*\field{votesA}*)]);
	(*$\Globally$*)((*\method{voteB}*) -> [(*\field{votesB}*) <- addOne (*\field{votesB}*)]);
	
	(*$\Globally$*)(! (*\method{voteA}*) -> [(*\field{votesA}*) <- (*\field{votesA}*)]);
	(*$\Globally$*)(! (*\method{voteB}*) -> [(*\field{votesB}*) <- (*\field{votesB}*)]);
	
	(*$\Globally$*)(((*\method{voteA}*) || (*\method{voteB}*)) 
			-> [(*\field{winner}*) <- A()]||[(*\field{winner}*) <- B()]);
	(*$\Globally$*)(((*\method{voteA}*) || (*\method{voteB}*)) && (*\field{votesA}*) > (*\field{votesB}*) 
			-> [(*\field{winner}*) <- A()]);
	(*$\Globally$*)(((*\method{voteA}*) || (*\method{voteB}*)) && (*\field{votesB}*) > (*\field{votesA}*) 
			-> [(*\field{winner}*) <- B()]);			
	
	(*$\Globally$*)((*\method{close}*) -> (*$\Globally$*)[(*\field{winner}*) <- (*\field{winner}*)]);	
\end{specification}%
Specifications are divided into three parts. \emph{Assumptions} describe properties we can assume, e.g., that in each step only one method is called. We also formulate assumptions on the predicates which must be met by the user providing the implementation of the functions and predicates. \emph{Requirements} state in which order and under which conditions methods can be called. \emph{Obligations} describe how the contract's fields must be updated.

The specification has four inputs. There is one input for each method indicating the called method. The \texttt{sender} input contains the caller of the method (\texttt{msg.sender} in Solidity). Cells describe the fields of the contract. We use a function \texttt{addOne}, which will be implemented as an increment. There are two predicates $>$ and $=$, which will be implemented as the actual ``greater'' and ``equals'' predicate. Lastly, we use three constants, one for each candidate \texttt{A} and \texttt{B}, and one for the owner of the contract. Constants will be translated to immutable Solidity fields which are set in the constructor.

Method calls are assumed to be atomic, i.e., exactly one method is called at a time. This follows other work on modeling the control flow of smart contracts, e.g.,~\cite{DBLP:conf/fc/MavridouL18, DBLP:conf/vstte/0001LCPDBNF19}.
Initially, neither candidate has more votes than the other, what can be achieved by setting \field{votesA} and \field{votesB} to 0. Lastly, keeping in mind that the $>$ predicate will be implemented as the actual ``greater'' predicate, we assume that $\field{votesA} > \field{votesB}$ and $\field{votesB} > \field{votesA}$ cannot hold true at the same time.
The requirements ensure that \method{close} can only be invoked by the owner of the contract and that no vote can be cast after the voting is closed.
The obligations describe the updates of the fields. When \method{voteA} or \method{voteB} is invoked, the respective field counting the votes must be increased. Otherwise, the field must self-update, i.e., stay unchanged. If a vote is cast, then the contract must update the winner to either \const{A} or \const{B}. If \field{votesA} holds more votes than \field{votesB}, the \field{winner} field is set to \const{A} and correspondingly to \const{B} if there are more votes in \field{votesB}.
Once \method{close} has been called, the winner cannot change anymore.

In this paper, we include in Specification~\ref{spec:voting} only the trace properties which interact with the hyperproperties we are interested in.
To obtain a satisfactory voting protocol, one would, for example, also record the voters who already cast their vote and require that every person only votes once by using a predicate. This could be achieved by using a field \field{voters} and guard calls to \method{voteA} and \method{voteB} with an uninterpreted predicate \texttt{inVoters \field{voters} sender}, which is later implemented as expected.

The winning region synthesized from the above specification is depicted in \Cref{fig:voting_tsl}. We write \field{w} instead of \field{winner}, and \field{vA} and \field{vB} instead of \field{votesA} and \field{votesB}, respectively. We also use \texttt{+1} instead of \texttt{addOne}. We simplify the system by summarizing multiple transitions between the same states with boolean formulas. For example, $(\neg (\field{vA} > \field{vB}) \land \update{\field{w}}{\const{B}}) \lor (\neg (\field{vB} > \field{vA}) \land \update{\field{w}}{\const{A}})$ expresses that \texttt{A} can be chosen as winner as long as \texttt{B} does not have strictly more votes.
\begin{figure*}[t]
	\centering
	\footnotesize
	\begin{tikzpicture}[initial text=, 
		->,
		node distance=4cm,
		state/.style = {circle, draw, minimum size=5mm,
			inner sep=0pt, outer sep=0pt},
		state/.default = 6pt  
		]

		\node[state, initial] (q1) {$q_1$};
		\node[state, above right = 0.4cm and 7.5cm of q1] (q3) {$q_3$};		
		\node[state, below right = 0.4cm and 7.5cm of q1] (q2) {$q_2$};
		
		\draw	(q1) edge[above,align=left] node[sloped] {\method{close} $\land$ \texttt{sender = \const{owner}} $\land$ \\ $\neg (\field{vA} > \field{vB}) \land \neg (\field{vB} > \field{vA})$\\ $\land \update{\field{vA}}{\field{vA}} \land \update{\field{vB}}{\field{vB}} \land \update{\field{w}}{\field{w}}$} (q3)
		(q1) edge[below,align=left] node[sloped] {$\neg (\field{vA} > \field{vB}) \land \neg (\field{vB} > \field{vA}) \land(\update{\field{w}}{\const{A}} \lor \update{\field{w}}{\const{B}})$ \\ $\land$ \big[(\method{voteA} $\land \update{\field{vA}}{\texttt{+1}~\field{vA}} \land \update{\field{vB}}{\field{vB}}$) \\ \qquad $\lor$ (\method{voteB} $\land \update{\field{vB}}{\texttt{+1}~\field{vB}} \land \update{\field{vA}}{\field{vA}}$)\big]} (q2)
		(q2) edge[loop right,align=left] node[yshift=-3mm] {$\big[(\neg (\field{vA} > \field{vB}) \land \update{\field{w}}{\const{B}}) \lor (\neg (\field{vB} > \field{vA}) \land \update{\field{w}}{\const{A}}) \big]$\\ $\land$ \big[ (\method{voteA} $\land  \update{\field{vA}}{\texttt{+1}~\field{vA}} \land \update{\field{vB}}{\field{vB}}$) \\ \qquad $\lor$ (\method{voteB} $\land \update{\field{vB}}{\texttt{+1}~\field{vB}} \land \update{\field{vA}}{\field{vA}}$) \big]} (q2)
		(q2) edge[right,align=left] node {\method{close} $\land$ \texttt{sender = \const{owner}} $\land$ $\update{\field{w}}{\field{w}}$} (q3)
		(q3) edge[loop right,align=left] node {\method{close} $\land$ \texttt{sender = \const{owner}} $\land$ $\update{\field{w}}{\field{w}}$}(q3);
	\end{tikzpicture}
	\caption{Winning region for Specification~\ref{spec:voting} as synthesized by \scsynt~\cite{scsynt}.} 
	\label{fig:voting_tsl}
\end{figure*}

From the winning region we can see that the TSL specification leaves the free choice which candidate is the winner when neither one has the majority of votes.
Strategies permitted by the winning region are, for example, to always choose \texttt{A} as winner, or go for the candidate who got the last vote.
Which strategies are considered to be suitable is described by the \htslm formulations of local determinism (Formula~\ref{spec:localdeterminism} in \Cref{sec:smartContracts}), symmetry (Formula~\ref{spec:symmetry} in \Cref{sec:smartContracts}), and the no harm property (Formula~\ref{spec:noharm} in \Cref{sec:smartContracts}).
Our goal is to combine the TSL specification with the \htslm specifications to obtain a satisfactory strategy.
Since the specification of the voting contract abstracts from the concrete number of votes cast with the \texttt{>} predicate, we need to add the following assumption to make the combination of the specifications realizable.
\begin{align*}
	&\forall \pi \ldot \forall \pi' \ldot \\
	& ~\big[ ((\field{votesA} > \field{votesB})_\pi \leftrightarrow (\field{votesA} > \field{votesB})_{\pi'}) \\
	& \quad \land ((\field{votesB} > \field{votesA})_\pi \leftrightarrow (\field{votesB} > \field{votesA})_{\pi'})\big] \\
	& ~ \W (\method{voteA}_\pi \nleftrightarrow \method{voteA}_{\pi'} \lor \method{voteB}_\pi \nleftrightarrow \method{voteB}_{\pi'}) 
\end{align*}
The property states that the evaluation of the $>$ predicate is the same for every two executions as long as they receive the same sequence of votes.
The property is satisfied when implementing \texttt{addOne} as increment and $>$ as ``greater than''.

When combining the TSL specification of the voting contract with local determinism, symmetry,  and the no harm property, we observe that there is only one valid strategy left to resolve a tie.
Local determinism states that the winner may only depend on the greater predicate on the votes and the vote cast in the current step. This forbids to take into account the past of the trace, e.g., by choosing the winner differently in the first and in the second step.
Symmetry excludes to brute force and always let the same candidate win in case of a tie. Together with local determinism, the winner must therefore truly depend on the current vote.
In combination with local determinism and symmetry, the no harm property forbids to let \texttt{A} win if the current vote is for \texttt{B} (and vice versa).
This leaves as only option to resolve a tie to let the candidate win who got the current vote.
In this case, we could therefore replace the hyperproperties with a trace property describing exactly this strategy, which leads to a much easier synthesis problem.


%

%

\section{Pseudo Hyperproperties}
\label{sec:pseudo_hyperprops}


The previous section has shown that it can be hard to see the implications of combining several hyperproperties with trace properties. It might be the case that the conjunction leaves only one strategy to resolve free choices.
To recognize such situations is not easy for a human.
We therefore propose to preprocess the specification by checking whether the hyperproperties in conjunction with the trace properties effectively define a trace property.
To do so, we introduce the notion of pseudo hyperproperties.
We first investigate the problem for the general definition based on equivalence checking, and subsequently consider the special case of synthesis.

\begin{definition}
	A hyperproperty $H$ is a pseudo hyperproperty iff there is a trace property $P$ such that
	$$
		\forall T \subseteq (\pow{\ap})^\omega \ldot T \in H \text{ iff } \forall t \in T \ldot t \in P
	$$
\end{definition}
If $H$ is a pseudo hyperproperty, $H$ describes the trace property $P = \bigcup_{T \in H} T$.
For proofs we will often use the following fact.
\begin{proposition}
	If $H$ is a pseudo hyperproperty, then $H$ is closed under union and subsets.
\end{proposition}
The downwards closure property in the above proposition implies that only hyperproperties expressible with $\forall^*$ formulas can be pseudo hyperproperties.
The following proposition establishes the convenient fact that if a $\forall^*$ \htslm formula $\varphi$ is a pseudo hyperproperty, then the formula obtained by using only a single execution variable in the body of the formula is equivalent to $\varphi$.
The proposition is close to the corresponding result for HyperLTL, which has been observed in context of using HyperLTL synthesis for the synthesis of linear distributed architectures~\cite{DBLP:journals/acta/FinkbeinerHLST20}.
Given a \htslm formula $\varphi = \forall \pi_1 \ldots \forall \pi_n \ldot \psi$, we define its $\forall^1$ counterpart as $\varphi[\pi] \coloneqq \forall \pi \ldot \psi[\pi_1 \mapsto \pi, \ldots ,\pi_n \mapsto \pi]$.

\begin{proposition}
	A $\forall^*$ \htslm formula $\varphi$ describes a pseudo hyperproperty iff $\varphi \leftrightarrow \varphi[\pi]$.
\end{proposition}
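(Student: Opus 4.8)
The plan is to prove the biconditional by isolating two facts that hold for \emph{every} $\forall^*$ \htslm formula $\varphi = \forall \pi_1 \ldots \forall \pi_n \ldot \psi$, regardless of the pseudo-hyperproperty assumption, and then using them in both directions. \textbf{(a) Diagonal instantiation:} $\varphi$ always implies $\varphi[\pi]$. Indeed, $\exSet \TSLSatisfaction{\assign} \varphi$ means the body holds under every assignment of $\pi_1,\ldots,\pi_n$ to executions in $\exSet$; in particular it holds under the diagonal assignment sending every $\pi_i$ to the same $\execution$, and evaluating $\psi$ on that assignment is exactly evaluating $\psi[\pi_1 \mapsto \pi, \ldots, \pi_n \mapsto \pi]$ under $\pi \mapsto \execution$. \textbf{(b) Singleton collapse:} for a single execution $\execution$, $\set{\execution} \TSLSatisfaction{\assign} \varphi$ iff $\set{\execution} \TSLSatisfaction{\assign} \varphi[\pi]$, because on the singleton set $\set{\execution}$ every universally quantified variable can only be bound to $\execution$, so the $n$-fold prefix and the single quantifier instantiate the body identically. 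Both facts are immediate from the semantics of the universal quantifier.

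For the direction ``$\varphi$ pseudo $\Rightarrow \varphi \leftrightarrow \varphi[\pi]$'', by (a) it suffices to show $\varphi[\pi] \rightarrow \varphi$. Assume $\exSet \TSLSatisfaction{\assign} \varphi[\pi]$. Since $\varphi[\pi] = \forall \pi \ldot \psi'$ has a quantifier-free body mentioning only $\pi$, satisfaction is unaffected by the ambient execution set, so $\exSet \TSLSatisfaction{\assign} \varphi[\pi]$ is equivalent to $\set{\execution} \TSLSatisfaction{\assign} \varphi[\pi]$ for every $\execution \in \exSet$, hence by (b) to $\set{\execution} \TSLSatisfaction{\assign} \varphi$ for every $\execution \in \exSet$. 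Now I invoke that a pseudo hyperproperty is closed under union (the preceding proposition): each singleton $\set{\execution}$ belongs to the hyperproperty denoted by $\varphi$, and $\exSet = \bigcup_{\execution \in \exSet} \set{\execution}$, so $\exSet$ belongs to it as well, i.e.\ $\exSet \TSLSatisfaction{\assign} \varphi$.

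For the converse ``$\varphi \leftrightarrow \varphi[\pi] \Rightarrow \varphi$ pseudo'', I would first observe that \emph{any} $\forall^1$ formula $\varphi[\pi]$ denotes a pseudo hyperproperty: setting $P \coloneqq \set{\execution \mid \set{\execution} \TSLSatisfaction{\assign} \varphi[\pi]}$, the reasoning of (b) shows that $\exSet \TSLSatisfaction{\assign} \varphi[\pi]$ holds iff every $\execution \in \exSet$ lies in $P$, which is precisely the defining condition of a pseudo hyperproperty. Since $\varphi$ and $\varphi[\pi]$ are equivalent, they denote the same hyperproperty, so $\varphi$ is a pseudo hyperproperty as well.

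The main points requiring care are the following. The notion ``describes a (pseudo) hyperproperty'' is phrased for trace sets over $(2^{\ap})^\omega$, whereas \htslm is evaluated over sets of executions together with an interpretation $\assign$; I would make the denotation precise by fixing $\assign$ and reading executions as the genuine traces $\translT{\cdot}$ supplied by \Cref{lem:equivalence}, so that the argument runs uniformly in $\assign$ and the resulting formula equivalence holds for all $\assign$. The second subtlety, and the real crux of the first direction, is that the union over $\execution \in \exSet$ is an \emph{arbitrary} (possibly infinite) union; this is legitimate precisely because pseudo hyperproperties are exactly the downward closures $\set{T \mid T \subseteq P}$, which are closed under arbitrary unions, but it must be stated explicitly rather than relying on binary union closure alone.
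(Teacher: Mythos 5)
Your proposal is correct and follows essentially the same route as the paper's proof: both pivot on the observation that a pseudo hyperproperty is exactly one determined by its singleton sets, and that satisfaction of the $\forall^1$ formula $\varphi[\pi]$ on a set $\exSet$ amounts precisely to checking each singleton $\set{\execution}$, $\execution \in \exSet$. The paper compresses this into a single chain of equivalences ($\exSet \TSLSatisfaction{\assign} \varphi$ iff $\forall \execution \in \exSet \ldot \set{\execution} \TSLSatisfaction{\assign} \varphi$ iff $\exSet \TSLSatisfaction{\assign} \varphi[\pi]$), while you unpack the same content into the diagonal instantiation, singleton collapse, and closure under arbitrary unions — a more explicit but not materially different argument.
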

\begin{proof}
	We observe that a hyperproperty $H$ is a pseudo hyperproperty iff
	$$
		\forall T \subseteq (\pow{\ap})^\omega \ldot T \in H \text{ iff } \forall t \in T \ldot \set{t} \in H
	$$
	Thus, for any set of executions $\exSet$ and any interpretation $\assign$, we have $\exSet \TSLSatisfaction{\assign} \varphi$ iff $\forall \execution \in \exSet \ldot \set{\execution} \TSLSatisfaction{\assign} \varphi$ iff $\exSet \TSLSatisfaction{\assign} \varphi[\pi]$.
\end{proof}
The proposition describes how to check if a \htslm property is a pseudo hyperproperty, namely by checking if $\varphi$ is equivalent to $\varphi[\pi]$. If it is, the TSL formula $\varphi[\pi]$ can be used for synthesis. The check is undecidable for \htslm, however.
\begin{theorem}
	It is undecidable whether a \htslm formula $\varphi$ describes a pseudo hyperproperty.
\end{theorem}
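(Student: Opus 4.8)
The plan is to reduce from TSL satisfiability, which is undecidable~\cite{DBLP:journals/corr/abs-2104-14988}. By the preceding proposition, it suffices to decide, for a constructed $\forall^2$ formula $\varphi$, whether $\varphi \leftrightarrow \varphi[\pi]$. The idea is to build $\varphi$ so that its $\forall^1$ collapse $\varphi[\pi]$ is trivially valid, while $\varphi$ itself is valid (satisfied by every execution set under every interpretation) exactly when a given TSL formula is valid. Concretely, fix a cell $\cell{c}$ together with two distinct update terms $\update{\cell{c}}{\funcTerm}$ and $\update{\cell{c}}{\funcTerm'}$ (adding a fresh cell and function symbol if the instance does not already provide them), and for a TSL formula $\psi$ let $\psi_\pi$ denote $\psi$ with every predicate and update term indexed by $\pi$. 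Given $\psi$, I define
\[
	\varphi \coloneqq \forall \pi \ldot \forall \pi' \ldot \psi_\pi \lor \G\big(\updatepi{\cell{c}}{\funcTerm}{\pi} \leftrightarrow \updatepi{\cell{c}}{\funcTerm}{\pi'}\big).
\]

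Next I would observe that the relational disjunct is a tautology when both path variables coincide, so the $\forall^1$ counterpart collapses: $\varphi[\pi] = \forall \pi \ldot \psi_\pi \lor \G(\updatepi{\cell{c}}{\funcTerm}{\pi} \leftrightarrow \updatepi{\cell{c}}{\funcTerm}{\pi}) \equiv \forall \pi \ldot \true$, i.e.\ $\varphi[\pi]$ holds for every execution set and interpretation. By the proposition, $\varphi$ is a pseudo hyperproperty iff $\varphi \leftrightarrow \varphi[\pi]$, which here amounts to $\varphi$ being satisfied by all pairs $(\exSet, \assign)$. I then show that this global validity of $\varphi$ is equivalent to validity of $\psi$ (satisfaction of $\psi$ by every execution under every interpretation). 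The easy direction: if $\psi$ is valid, then $\psi_\pi$ always holds, so every disjunction holds and $\varphi \equiv \true = \varphi[\pi]$. For the converse, if $\psi$ is not valid there are an interpretation $\assign$ and an execution $e$ with $e \not\TSLSatisfaction{\assign} \psi$; choosing a second execution $e'$ that differs from $e$ on $\update{\cell{c}}{\funcTerm}$ at some time point and taking $\exSet = \set{e, e'}$ refutes $\varphi$ under $\assign$, so $\varphi \not\leftrightarrow \varphi[\pi]$ and $\varphi$ is not a pseudo hyperproperty.

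The main obstacle is this converse direction, specifically guaranteeing the witnessing execution $e'$. I would exploit that whether $\update{\cell{c}}{\funcTerm}$ holds at a time point is determined purely syntactically by the computation, hence is independent of $\assign$; since $\cell{c}$ admits the two distinct update terms $\update{\cell{c}}{\funcTerm}$ and $\update{\cell{c}}{\funcTerm'}$, one can always obtain $e'$ from $e$ by flipping the update of $\cell{c}$ at step $0$, so that $e$ and $e'$ disagree on $\update{\cell{c}}{\funcTerm}$ regardless of $\assign$. This makes the relational disjunct fail on $(e, e')$ while $\psi_\pi$ already fails on $e$, yielding the required counterexample. Finally, the construction of $\varphi$ from $\psi$ is computable, and a decision procedure for the pseudo-hyperproperty question would decide TSL validity; since $\psi$ is valid iff $\neg \psi$ is unsatisfiable, this also decides TSL satisfiability, whose undecidability~\cite{DBLP:journals/corr/abs-2104-14988} therefore transfers. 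As the produced $\varphi$ already lies in the $\forall^*$ fragment, the result holds there and \emph{a fortiori} for full \htslm.
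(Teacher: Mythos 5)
Your proof is correct, and it is essentially a mirror image of the paper's argument rather than a copy of it. The paper also reduces from TSL satisfiability, but dually: it \emph{conjoins} the lifted formula $\psi_\pi$ with a cross-trace equivalence $(\texttt{p}~a)_\pi \leftrightarrow (\texttt{p}~a)_{\pi'}$ over a fresh predicate $\texttt{p}$ and a fresh input $a$, so that the resulting formula is a pseudo hyperproperty iff $\psi$ is \emph{unsatisfiable}; you instead \emph{disjoin} $\psi_\pi$ with a cross-trace equivalence of update terms, so that your formula is pseudo iff $\psi$ is \emph{valid}, and you then pass from validity back to satisfiability by negation. The two proofs also certify non-pseudo-ness differently: the paper argues directly via failure of closure under union, choosing an interpretation under which $\texttt{p}~a$ is constantly true on one execution and constantly false on another, giving two singleton models whose union is not a model; you instead invoke the preceding proposition and exploit that your $\varphi[\pi]$ is trivially valid, which collapses the whole question to validity of $\varphi$. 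A nice technical payoff of your choice of toggle is that update terms are evaluated purely syntactically ($\computation[\timePoint]~\cell{c} \equiv \funcTerm$), so your counterexample pair $\set{e, e'}$ refutes $\varphi$ under \emph{every} interpretation without any argument about how a fresh predicate may be interpreted; the small price is that you need a cell admitting two distinct update terms, which you correctly handle by adding fresh symbols (this does not affect validity of $\psi$, since $\psi$ does not mention them). Both constructions yield $\forall^2$ formulas, so both establish undecidability already for the $\forall^*$ fragment.
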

\begin{proof}
	The proof follows from undecidability of the satisfiability problem of TSL~\cite{DBLP:journals/corr/abs-2104-14988}. The formula $\varphi = \forall \pi \ldot \forall \pi'. (\texttt{p}~a)_\pi \leftrightarrow (\texttt{p}~a)_{\pi'}$ is not a pseudo hyperproperty as there exists an interpretation of \texttt{p} such that $\exSet$ consists of executions for which $\texttt{p}~a$ always evaluates to $\mathit{true}$, and $\exSet'$ consists of executions for which $\texttt{p}~a$ always evaluates to $\mathit{false}$. $\exSet$ and $\exSet'$ are not closed under union for $\varphi$. Now, let a TSL formula $\psi$ be given for which we assume w.l.o.g. that \texttt{p} is a fresh predicate and $a$ a fresh input. Since $\forall \pi \ldot \forall \pi' \ldot \mathit{false}$ is a pseudo hyperproperty, the formula $\forall \pi \ldot \forall \pi' \ldot ((\texttt{p}~a)_\pi \leftrightarrow (\texttt{p}~a)_{\pi'}) \land \psi_\pi$ is a pseudo hyperproperty iff $\psi$ is unsatisfiable. Here, $\psi_\pi$ denotes the formula obtained by lifting $\psi$ to a \htslm formula by annotating predicate and update terms with the trace variables $\pi$.
\end{proof}

%

Fortunately, the above negative result comes with the remedy that we can approximate the problem in HyperLTL, which is a result of \Cref{lem:equivalence}.

\begin{theorem}
	Let $\varphi$ be a \htslm formula. If $\transl{\varphi}$ is a pseudo hyperproperty, then so is $\varphi$.
\end{theorem}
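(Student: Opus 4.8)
The plan is to prove the contrapositive-free implication directly: assume $\transl{\varphi}$ is a pseudo hyperproperty and show $\varphi$ is one. By the earlier proposition characterizing pseudo hyperproperties via the $\forall^1$ collapse, it suffices to show that $\varphi \leftrightarrow \varphi[\pi]$, \ie that for every set of executions $\exSet$ and every interpretation $\assign$ we have $\exSet \TSLSatisfaction{\assign} \varphi$ iff $\exSet \TSLSatisfaction{\assign} \varphi[\pi]$. The whole strategy is to transport the corresponding HyperLTL equivalence $\transl{\varphi} \leftrightarrow \transl{\varphi}[\pi]$ across the translation using \Cref{lem:equivalence}, which states $\exSet \TSLSatisfaction{\assign} \varphi$ iff $\translT{\exSet} \models \transl{\varphi}$.

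First I would note that the translation $\transl{\cdot}$ commutes with the $\forall^1$-counterpart operation in the relevant sense, namely $\transl{\varphi[\pi]} = \transl{\varphi}[\pi]$, since $\transl{\cdot}$ acts only on the quantifier-free body by renaming predicate and update terms to atomic propositions indexed by trace variables, and substituting all trace variables by a single $\pi$ commutes with that renaming. Next, since $\transl{\varphi}$ is assumed to be a pseudo hyperproperty, the HyperLTL version of the cited proposition (the analogous result for HyperLTL observed in~\cite{DBLP:journals/acta/FinkbeinerHLST20}) gives $\transl{\varphi} \leftrightarrow \transl{\varphi}[\pi]$, meaning every trace set satisfies $\transl{\varphi}$ exactly when it satisfies $\transl{\varphi}[\pi]$.

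I would then chain the equivalences. Fix an arbitrary $\exSet$ and $\assign$. By \Cref{lem:equivalence}, $\exSet \TSLSatisfaction{\assign} \varphi$ iff $\translT{\exSet} \models \transl{\varphi}$; by the HyperLTL pseudo-hyperproperty equivalence this holds iff $\translT{\exSet} \models \transl{\varphi}[\pi]$; by the commutation observation this is $\translT{\exSet} \models \transl{\varphi[\pi]}$; and applying \Cref{lem:equivalence} once more to the \htslm formula $\varphi[\pi]$ yields $\exSet \TSLSatisfaction{\assign} \varphi[\pi]$. Hence $\varphi \leftrightarrow \varphi[\pi]$, so by the earlier proposition $\varphi$ describes a pseudo hyperproperty.

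The main obstacle I anticipate is justifying the commutation $\transl{\varphi[\pi]} = \transl{\varphi}[\pi]$ and, more importantly, making sure \Cref{lem:equivalence} is applicable to $\varphi[\pi]$ as a genuine $\forall^1$ \htslm formula (its single-variable body is still a legal \htslm body, so the lemma applies unchanged). One subtlety worth checking is that the HyperLTL-side collapse result really quantifies over \emph{all} trace sets $T \subseteq (2^{\ap_\varphi})^\omega$, whereas the \htslm equivalence only ranges over sets of the form $\translT{\exSet}$, which need not be all trace sets; but since the HyperLTL equivalence holds for every trace set it holds in particular for the images $\translT{\exSet}$, so the restriction is harmless and the argument goes through.
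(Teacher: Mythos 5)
Your proof is correct on the fragment where it applies, but it takes a genuinely different route from the paper's. The paper's proof is purely semantic: it transfers \emph{closure under union and subsets} from $\transl{\varphi}$ to $\varphi$ via \Cref{lem:equivalence}, using that $\translT{\exSet \cup \exSet'} = \translT{\exSet} \cup \translT{\exSet'}$ and that $\exSet \subseteq \exSet'$ iff $\translT{\exSet} \subseteq \translT{\exSet'}$; since closure under (arbitrary) unions and subsets characterizes pseudo hyperproperties, the claim follows. You instead transfer the $\forall^1$-collapse characterization: you invoke the HyperLTL analogue of the collapse proposition to get $\transl{\varphi} \leftrightarrow \transl{\varphi}[\pi]$, commute $\transl{\cdot}$ with variable unification, and apply \Cref{lem:equivalence} twice. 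Both arguments pivot on \Cref{lem:equivalence}, but yours needs two additional ingredients: the HyperLTL analogue of the collapse proposition (which does hold, by the same singleton argument the paper uses for \htslm) and the identity $\transl{\varphi[\pi]} = \transl{\varphi}[\pi]$. The latter is slightly overstated as a syntactic identity: once the $\mathit{cellProps}$ conjunct of the translation is taken into account, collapsing $\pi_1, \ldots, \pi_n$ to $\pi$ duplicates that conjunct, so one only gets logical equivalence --- but that is all your chain of ``iff''s requires. What your route buys is that it directly exhibits the TSL witness of pseudo-ness, namely $\varphi[\pi]$, which is exactly the formula the paper's subsequent decision procedure (the $\exists^*\forall^*$ HyperLTL satisfiability check) operates on.

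The one genuine restriction: the theorem is stated for arbitrary \htslm formulas, and the paper's closure-transfer argument works for any quantifier prefix, since \Cref{lem:equivalence} covers $(\exists^*\forall^*)^k$ prefixes. Your argument is confined to $\forall^*$ formulas, because $\varphi[\pi]$ --- and the collapse proposition you rely on --- is only defined for those. This is not vacuous: non-$\forall^*$ formulas can describe pseudo hyperproperties, e.g., $\forall \pi \ldot \exists \pi' \ldot (\texttt{p}~a)_\pi \leftrightarrow (\texttt{p}~a)_{\pi'}$ is satisfied by every set of executions, hence describes a pseudo hyperproperty, yet your proof cannot even be stated for it. So as written your proof establishes the theorem only for the universal fragment; that happens to be the only case the paper later needs, but it is strictly weaker than the stated result.
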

\begin{proof}
	Closedness under union and subsets follows from \Cref{lem:equivalence} and since for every set of executions $\exSet$ and interpretation $\assign$, $\translT{\exSet} = \bigcup_{\execution \in \exSet} \translT{\execution}$, i.e., $\translT{\exSet \cup \exSet'} = \translT{\exSet} \cup \translT{\exSet'}$, and $\exSet \subseteq \exSet'$ iff $\translT{\exSet} \subseteq \translT{\exSet'}$.
\end{proof}
Using the above result, we can translate a $\forall^*$ \htslm formula to HyperLTL and check its equivalence with the corresponding $\forall^1$ formula. For HyperLTL, this check is decidable, because of the decidability of the $\exists^*\forall^*$ fragment of HyperLTL~\cite{DBLP:journals/acta/FinkbeinerHLST20}: a HyperLTL formula $\varphi = \forall \pi_1 \ldots \forall \pi_n \ldot \psi$ is equivalent to its $\forall^1$ counterpart $\varphi[\pi]$ iff $\varphi[\pi] \rightarrow \varphi$ (because of the semantics of the $\forall$). By merging the quantifiers, this implication is valid iff the formula 
$$
	\exists \pi_1 \ldots \exists \pi_n \ldot \forall \pi \ldot \neg \psi \land \psi[\pi_1 \mapsto \pi, \ldots \pi_n \mapsto \pi]
$$
is unsatisfiable. Thus, if the check returns UNSAT, the \htslm formula is guaranteed to describe a pseudo hyperproperty, so we can replace the hyperproperty with its TSL version.

So far, we described the check for pseudo hyperproperties in terms of satisfiability checking. This can be useful to detect superfluous specifications or mistakes. For example, the developer could have already specified in TSL that in case of a tie, the current vote determines the winner.
\begin{align*}
	&\G (\neg (\field{votesA} > \field{votesB}) \land \neg (\field{votesB} > \field{votesA}) \\
	& \quad \rightarrow \update{\field{winner}}{\const{A}} \leftrightarrow \method{voteA})
\end{align*}
With this specification, the hyperproperties stated in \Cref{sec:smartContracts} are entailed by the specification. In the case of the ``no harm'' property, for example, the check with EAHyper reveals correctly within $0.003$ seconds that the conjunction of the trace properties with the ``no harm'' property is a pseudo hyperproperty.

Without the above TSL specification however, the check for the conjunction of local determinism, symmetry, and the no harm property is labeled as a hyperproperty by EAHyper, even though there is only one possible strategy. The reason for that is the difference between realizability and satisfiability. Consider the following two traces.
\begin{align*}
	& (1)~ \set{\method{voteA}, \update{\field{winner}}{\const{A}}}^\omega \\
	& (2)~ \set{\method{voteB}, \update{\field{winner}}{\const{A}}}^\omega
\end{align*}
In the first trace, neither $\field{votesA} > \field{votesB}$ nor $\field{votesB} > \field{votesA}$, and there is always a vote for \texttt{A}, who is always the winner. The second trace is similar but there is always a vote for \texttt{B}. As single sets, both traces satisfy all three hyperproperties. Together, they do not (because of the symmetry requirement). Therefore, EAHyper returns that the conjunction of the properties is not a pseudo hyperproperty (even in combination with the full contract specification). 
When solving the realizability problem, however, we do not need to find an \emph{equivalent} trace property, it is enough to find a trace property which evaluates the same on all sets generated by strategies.
We therefore adapt the notion of pseudo hyperproperties to realizability.
We give the definition for hyperproperties of traces over $\apin \cupdot \apout$, but the definition applies to traces of any type, particularly also to hyperproperties over TSL-like executions.


\begin{definition}
	Let $H$ be a hyperproperty over atomic propositions $\ap = \apin \cupdot \apout$.
	$H$ is a pseudo hyperproperty in the context of realizability if there is a trace property $P$ such that
	\begin{align*}
		&\forall \sigma: (2^{\apin})^+ \rightarrow 2^{\apout}\ldot \\
		&\qquad \mathit{traces}(\sigma) \in H \text{ iff } \forall t \in \mathit{traces}(\sigma) \ldot t \in P
	\end{align*}
\end{definition}
The definition implies that a HyperTSL formula $\varphi$ is a pseudo hyperproperty in the context of realizability iff there is a TSL formula $\psi$ such that a strategy $\sigma$ realizes $\varphi$ iff $\sigma$ realizes $\psi$.
In the context of realizability, however, $\psi$ may not be equivalent to $\varphi[\pi]$. To show why, we give an example in HyperLTL but the reasoning carries over to HyperTSL. Consider the example of the following HyperLTL formulas over input $i$ and output $o$, which are abstract, simplified versions of local determinism, symmetry, and the no harm property.
\begin{align*}	
	&\forall \pi \ldot \forall \pi' \ldot (i_\pi \leftrightarrow i_{\pi'}) \rightarrow (o_\pi \leftrightarrow o_{\pi'}) \\ 
	&\forall \pi \ldot \forall \pi' \ldot (i_\pi \nleftrightarrow i_{\pi'}) \rightarrow (o_\pi \nleftrightarrow o_{\pi'}) \\
	&\forall \pi \ldot \forall \pi' \ldot i_\pi \land \neg i_{\pi'} \land o_{\pi'} \rightarrow o_{\pi}
\end{align*}
The properties are not temporal, so they are realizable iff there are values $o_1, o_2$ such that $\set{\set{i, o_1}^\omega, \set{\neg i, o_2}^\omega}$ satisfies the three formulas. Let $\varphi$ be their conjunction.
There are only four possible assignments of $o_1, o_2$ to boolean values, each of which corresponds to a possible strategy. Indeed, for every possible strategy, $\varphi$ is satisfied iff the trace property $i \leftrightarrow o$ is satisfied. However, the same does not hold for the trace property $\true$, which is equivalent to $\varphi[\pi]$.
Unfortunately, the general problem is undecidable already for HyperLTL, due to the undecidability of the realizability problem of the $\forall^*$ fragment of HyperLTL.

\begin{theorem}
	Given a HyperLTL formula $\varphi$, it is in general undecidable if there exists an LTL formula $\psi$ such that for all strategies $\sigma$, $\mathit{traces}(\sigma) \models \varphi$ iff $\forall t \in \mathit{traces}(\sigma) \ldot t \models \psi$.
\end{theorem}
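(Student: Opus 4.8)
The plan is to reduce from the realizability problem of $\forall^*$ HyperLTL, which is undecidable~\cite{DBLP:journals/acta/FinkbeinerHLST20}. Given a $\forall^*$ HyperLTL formula $\chi = \forall\pi_1\ldots\forall\pi_n\ldot\psi_\chi$ over $\ap=\apin\cupdot\apout$, I would introduce a fresh output proposition $a$ and set
$$
	\varphi := \chi \;\wedge\; \forall\pi\ldot\forall\pi'\ldot\G(a_\pi\leftrightarrow a_{\pi'}).
$$
A conjunction of $\forall^*$ formulas is again $\forall^*$ HyperLTL after merging quantifier blocks, so $\varphi$ is a legal instance. The claim I would establish is that $\varphi$ is a pseudo hyperproperty in the context of realizability (equivalently, the desired LTL formula $\psi$ exists) \emph{iff} $\chi$ is unrealizable. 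Since adding a fresh output cannot change realizability of $\chi$, a decision procedure for the pseudo-hyperproperty question would decide unrealizability and hence realizability of $\chi$, contradicting~\cite{DBLP:journals/acta/FinkbeinerHLST20}.

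For the first direction, assume $\chi$ is unrealizable. Then no strategy satisfies the $\chi$-conjunct, so $\mathit{traces}(\sigma)\not\models\varphi$ for every strategy $\sigma$. I would then exhibit the LTL formula $\psi=\false$: since trace sets are nonempty, $\forall t\in\mathit{traces}(\sigma)\ldot t\models\psi$ is false for every $\sigma$ as well. Both sides of the biconditional are uniformly false, so $\psi=\false$ witnesses that $\varphi$ is a pseudo hyperproperty in the context of realizability.

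For the second direction, assume $\chi$ is realizable and fix a realizing strategy $\sigma_\chi$. I would consider the strategy $\sigma_2$ that behaves like $\sigma_\chi$ on $\ap$ but outputs $a$ in an input-dependent way (e.g.\ mirroring an input bit at time $0$; here I use that the input alphabet is nonempty, which one may assume or enforce with an extra fresh input). Then $\mathit{traces}(\sigma_2)\models\chi$ while the second conjunct fails, so $\sigma_2$ does not realize $\varphi$. The crucial observation is that every \emph{single} trace $t$ of $\sigma_2$ still occurs in some $\varphi$-realizing strategy: the strategy $\sigma'_t$ that acts like $\sigma_\chi$ on $\ap$ and outputs the constant $a$-sequence of $t$ on all inputs realizes $\varphi$ (the $\chi$-conjunct holds because $a$ is fresh, and the second conjunct holds because $a$ is now input-independent) and produces $t$. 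Now suppose some trace property $P$ witnessed that $\varphi$ is a pseudo hyperproperty. Every $\varphi$-realizing strategy must have all of its traces in $P$, so in particular each $\sigma'_t$ does, whence every trace of $\sigma_2$ lies in $P$. By the biconditional defining $P$ this forces $\mathit{traces}(\sigma_2)\models\varphi$, contradicting the previous step. Hence no trace property $P$ — and a fortiori no LTL $\psi$ — works, so $\varphi$ is not a pseudo hyperproperty.

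I expect the main obstacle to be this second direction, precisely the covering argument. Because a strategy's trace set is indexed bijectively by input streams, two distinct strategies can never be related by trace-set inclusion, so one cannot exhibit non-pseudo-ness by a single dominating strategy; instead it must be shown trace-by-trace via the constant-$a$ strategies $\sigma'_t$. Care is also needed with the LTL-definability requirement: in the unrealizable case an explicit LTL formula ($\false$) must be produced, whereas in the realizable case the argument must rule out \emph{all} trace properties, not merely the natural candidate $\varphi[\pi]$, which — as the preceding HyperLTL example shows — is in general not the right witness.
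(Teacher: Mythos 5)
Your proof is correct and follows essentially the same route as the paper: a reduction from $\forall^*$ HyperLTL realizability using a fresh output proposition forced to agree across all traces, the witness $\psi = \false$ in the unrealizable case, and in the realizable case a strategy whose $a$-output depends on an input bit, whose traces are nonetheless covered by constant-$a$ strategies that realize $\varphi$. The only cosmetic differences are the $\G$ in your second conjunct and that the paper instantiates your per-trace family $\sigma'_t$ as just two explicit strategies $\sigma_1$, $\sigma_2$ (with $o$ only depending on the first position), which is the same covering argument.
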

\begin{proof}
	We show undecidability by a reduction from the $\forall^*$ HyperLTL realizability problem, which is undecidable~\cite{DBLP:journals/acta/FinkbeinerHLST20}. Let $\rho$ be a $\forall^*$ HyperLTL formula over $\ap = \apin \cupdot \apout$. 
	We assume w.l.o.g. that $\apin$ is non-empty.
	We define $\varphi$ as follows, where $o$ is an output proposition that does not occur in $\rho$.
	$$
	\varphi \coloneqq \rho \land \forall \pi \ldot \forall \pi' \ldot o_\pi \leftrightarrow o_{\pi'}
	$$	
	We claim that $\rho$ is unrealizable iff there exists an LTL formula $\psi$ such that for all strategies $\sigma$, $\mathit{traces}(\sigma) \models \varphi$ iff $\forall t \in \mathit{traces}(\sigma) \ldot t \models \psi$.
	If $\rho$ is unrealizable, then we choose $\psi := \false$. Since $\rho$ is unrealizable, it holds for all $\sigma$ that $\mathit{traces}(\sigma) \not\models \varphi$ iff $\forall t \in \mathit{traces}(\sigma) \ldot t \models \psi$.
	For the other direction, assume that $\rho$ is realizable and that there exists a suitable LTL formula $\psi$. Let $\sigma$ be the realizing strategy of $\rho$.
	Since $o$ is fresh for $\rho$, we can extend $\sigma$ to $\sigma_1$ and $\sigma_2$, where $\sigma_1$ 
	adds $o$ to the first output (for any input), and $\sigma_2$ does not add $o$ (also for any input).
	Both strategies realize $\varphi$, therefore, by our assumption, both strategies realize $\psi$.
	Now, let $i \in \apin$ be any input. We define $\sigma_3$ as the strategy that adds $o$ to the first output exactly if $i$ holds in the first position of the input sequence.
	Since every trace generated by $\sigma_3$ is either a trace of $\sigma_1$ or $\sigma_2$, $\sigma_3$ realizes $\psi$. However, it does not realize $\varphi$, which contradicts our assumption.
\end{proof}

Lastly, we observe that we can decide if a HyperTSL formula is a pseudo hyperproperty if the formula contains as conjunct a local determinism formula like the one in our running example. We define local determinism in the general case as follows.
\begin{align*}
	&\mathit{localDeterminism} := \\
	& \qquad \forall \pi \ldot \forall \pi' \ldot \G ((\bigwedge_{\predTerm \in \predTerms} \predTerm_\pi \leftrightarrow \predTerm_{\pi'}) \rightarrow (\bigwedge_{\tau^c \in \updateTerms} \tau^c_\pi\leftrightarrow \tau^c_{\pi'}))
\end{align*}

\begin{proposition}
	For every $\forall^*$ HyperTSL formula $\varphi$ over predicate terms $\predTerms$ and update terms $\updateTerms$, if $\varphi = \mathit{localDeterminism} \land \varphi'$, then deciding whether $\varphi$ is a pseudo hyperproperty in the context of realizability is equivalent to a Boolean SAT problem.
\end{proposition}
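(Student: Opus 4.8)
The plan is to exploit that the conjunct $\mathit{localDeterminism}$ collapses the realizable strategies to a finite set of \emph{memoryless} choice functions, which is exactly what bridges the problem to propositional satisfiability. First I would observe that, for any strategy $\sigma$, the generated execution set $\mathit{traces}(\sigma) = \set{(\sigma(\iota),\iota) \mid \iota \in \inputStream}$ satisfies $\mathit{localDeterminism}$ iff the updates chosen at each step depend only on the current predicate valuation and not on the history: two executions produced by $\sigma$ that agree on all predicate terms at a time $\timePoint$ must agree on all update terms at $\timePoint$, and requiring this for \emph{all} pairs of executions is exactly the statement that the map sending a predicate valuation $x \in \pow{\predTerms}$ to the chosen update tuple is a well-defined function. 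Extending it arbitrarily on unreachable valuations, each such $\sigma$ is, up to its generated trace set, a memoryless function $g : \pow{\predTerms} \to \pow{\updateTerms}$ respecting $\mathit{cellProps}$ (exactly one update term per cell). There are only finitely many such $g$, and each is describable by finitely many Boolean variables.

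Next I would fix the canonical candidate trace property. Let $G_\mathrm{good} = \set{g \mid \mathit{traces}(\sigma_g) \TSLSatisfaction{\assign} \varphi}$ and $P^{\ast} = \bigcup_{g \in G_\mathrm{good}} \mathit{traces}(\sigma_g)$. I would show $P^{\ast}$ is the canonical witness by a minimality argument: for any witnessing trace property $P$, the ``only if'' direction forces $P \supseteq P^{\ast}$ (every good strategy is memoryless and its traces lie in $P$); and since the ``only if'' direction for $P^{\ast}$ holds unconditionally (any $\sigma$ with $\mathit{traces}(\sigma) \TSLSatisfaction{\assign} \varphi$ is some $\sigma_g$ with $g \in G_\mathrm{good}$, hence $\mathit{traces}(\sigma) \subseteq P^{\ast}$), $P^{\ast}$ inherits the ``if'' direction from any working $P$. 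Therefore $\varphi$ is a pseudo hyperproperty in the context of realizability iff $P^{\ast}$ itself works, i.e.\ iff every strategy whose traces all lie in $P^{\ast}$ generates a set satisfying $\varphi$.

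The final step reduces ``$P^{\ast}$ works'' to a finite Boolean problem. The key observation is that $\mathit{traces}(\sigma) \subseteq P^{\ast}$ iff along every generated execution the update chosen at each step matches a \emph{single} good $g$ throughout that execution, while distinct executions may follow different good functions, subject to the tree-consistency constraint that two executions sharing a predicate-valuation prefix must agree on the updates there. Writing $\varphi' = \forall \pi_1 \ldots \forall \pi_k \ldot \psi'$ with $\psi'$ a safety body, $P^{\ast}$ fails to work iff there exist good functions $g_1, \ldots, g_k \in G_\mathrm{good}$ and finite predicate-valuation prefixes selecting executions $t_1 \in \mathit{traces}(\sigma_{g_1}), \ldots, t_k \in \mathit{traces}(\sigma_{g_k})$ that are pairwise tree-consistent yet either violate $\psi'$ or, by history-dependent mixing, violate $\mathit{localDeterminism}$. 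Since $\psi'$ is safety, such a violation already appears on a bounded prefix (bounded by the size of a safety automaton for $\psi'$ read over $k$-tuples), and any tree-consistent bounded witness extends to a full strategy with all traces in $P^{\ast}$ by continuing along the committed $g_j$ off the witnessed paths. Hence the search is finite and encodes directly as a Boolean formula: variables choose $g_1, \ldots, g_k$ from the finite $G_\mathrm{good}$ together with the bounded prefixes, with clauses enforcing goodness, tree-consistency, and a $\psi'$-violation, so that the formula is satisfiable iff $\varphi$ is \emph{not} a pseudo hyperproperty. For the matching lower bound I would encode an arbitrary CNF into the $\varphi'$ part so that its pseudo-hyperproperty status mirrors satisfiability, yielding the claimed equivalence.

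The main obstacle is the precise handling of tree-consistency in the last step: I must argue that it is exactly single-strategy realizability that permits all $k$ witness executions to coexist, and that it suffices to consider bounded prefixes and a bounded number $k$ of representative executions rather than an unbounded mixture of good functions. The two delicate points are (i) showing that every tree-consistent finite witness set extends to a full strategy whose entire trace set still lies in $P^{\ast}$, and (ii) collapsing the temporal body $\psi'$ (which may contain $\X$, $\G$, and $\W$) to a bounded propositional condition on the memoryless functions, which relies on safety together with $\mathit{localDeterminism}$ having already fixed updates as functions of the current predicate valuation. Some care is additionally needed to restrict the domain of each $g$ to predicate valuations consistent with the TSL assumptions, since not every $x \in \pow{\predTerms}$ is realizable.
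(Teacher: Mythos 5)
Your first paragraph is, in substance, the paper's \emph{entire} proof: the paper argues only that $\mathit{localDeterminism}$ forces any conforming strategy to assign the same updates for the same predicate valuation, so that the finitely many maps $g : \pow{\predTerms} \to \pow{\updateTerms}$ exhaust the candidates and the problem ``becomes finite.'' Everything you do after that is development the paper does not supply, and it addresses real gaps in the paper's sketch: the paper never explains how the existential quantification over trace properties $P$, nor the universal quantification over \emph{arbitrary} (not necessarily positional) strategies $\sigma$, is made finite. Your canonical witness $P^{\ast}$ with the minimality argument handles the former; your tree-consistent bounded-witness search handles the latter; and your CNF-encoding sketch is the only attempt at the hardness direction that the word ``equivalent'' in the statement literally demands (the paper's sketch supports only the reduction \emph{to} SAT). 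One restriction you impose --- that the body $\psi'$ be safety, so that violations appear on bounded prefixes --- is not in the statement of the proposition, but it matches the paper's smart-contract setting (syntactic safety) and the paper's own sketch is silent on how non-safety bodies would be handled.

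One caveat, which is a flaw you share with the paper rather than one you introduced: $\mathit{localDeterminism}$ does not exactly say that the update choice is a single well-defined function of the current predicate valuation. The $\G$ compares the two quantified executions at \emph{equal} time points only, so any time-indexed family $g_0, g_1, g_2, \dots$ of positional maps also satisfies it --- for instance, a strategy that performs one update at step $0$ and a different one at step $1$, irrespective of the input, satisfies $\mathit{localDeterminism}$ yet is generated by no single $g$, and one can write safety bodies $\varphi'$ that are realizable only by such time-dependent strategies. Consequently ``$\mathit{traces}(\sigma)$ satisfies $\mathit{localDeterminism}$ iff $\sigma$ is memoryless'' is false as stated, and both your $G_{\mathrm{good}}$/$P^{\ast}$ construction and the paper's finiteness claim would need repair (e.g., by quotienting out or bounding the time-dependence, or by strengthening $\mathit{localDeterminism}$ to compare valuations across different time points) before either argument is airtight.
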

The above proposition follows from the observation that $\mathit{localDeterminism} \land\varphi$ is realizable iff there is a positional strategy that always assigns the same cell updates for the same predicate evaluations, independently of the trace's past. In this case, there are only finitely many combinations of predicate evaluations and the problem becomes finite.

\section{Resolving Choices with Repair}
\label{sec:repair}
If a hyperproperty is not a pseudo hyperproperty, we synthesize the winning region of the trace property and then check if free choices can be resolved such that the hyperproperty is satisfied. This is a repair algorithm additionally respecting the distinction between inputs and outputs.
We first give a formal definition of the problem we are interested in for \htslm formulas. As observed several times, the formal problem is undecidable for \htslm, but a sound approximation can be achieved through HyperLTL.
We then discuss how to simplify the problem for $\forall^*$ formulas and present a prototype implementation, which successfully repairs the synthesized voting contract with respect to determinism, symmetry, and the no harm property.

The result of synthesizing the extended control flow of a smart contract from TSL specifications is a Mealy machine as depicted in \Cref{fig:voting_tsl}. A \emph{Mealy machine} is a tuple $(Q, \Sigma, \delta, q_0)$, where $Q$ is a set of states, $\Sigma = 2^\ap$ with $\ap = \apin \cupdot \apout$ is a labeling alphabet, $\delta : Q \times \Sigma \rightarrow Q$ is a transition function, and $q_0$ is the initial state.

\begin{definition}
	A \emph{free choice} in a Mealy machine $(Q, \Sigma, \delta, q_0)$ consists of a state $q \in Q$ and an input $i \in 2^{\apin}$ such that there are at least two outputs $o_1, o_2 \in 2^{\apout}$ with $(q, i \cup o_1, q_1) \in \delta$ and $(q, i \cup o_2, q_2) \in \delta$ for some $q_1$, $q_2$.
\end{definition}

\begin{definition}
	A Mealy machine $\hat{S}$ is a \emph{refinement} of a Mealy machine $S$ if $\hat{Q} = Q$, $\hat{\Sigma} = \Sigma$, $\hat{q_0} = q_0$ and $\hat{\delta} \subseteq \delta$ such that for every $q \in Q, i \in 2^{\apin}$, if there are $o \in 2^{\apout}$ and $q' \in Q$ such that $\delta(q,i\cup o, q')$, then there are $\hat{o} \in 2^{\apout}$ and $\hat{q'} \in \hat{Q}$ such that $\hat{\delta}(q,i \cup \hat{o}, \hat{q'})$ and $\delta(q,i\cup \hat{o}, \hat{q'})$.
\end{definition}

The above definition ensures that if $S$ defines a set of strategies for a TSL property $\psi$, then a refinement $\hat{S}$ still describes at least one strategy for $\psi$. Our goal is to refine $S$ such that $\hat{S}$ models a \htslm property $\varphi$.

The system $S$ is produced from the LTL approximation, it might therefore contain spurious traces which, for any interpretation, cannot be produced by a combination of an input stream and a computation. We define when such a system satisfies a \htslm formula.
\begin{definition}
	Let $\varphi$ be a \htslm formula over $\predTerms, \updateTerms$ and let $S$ be a Mealy machine over $\ap = \transl{\predTerms} \cup \transl{\updateTerms}$. \emph{$S$ models $\varphi$} if for every interpretation $\assign$:
	\begin{align*}
		\set{(\computation, \iota) \mid \exists t \in \mathit{traces}(S) \ldot t = \translT{(\computation, \iota)}} \TSLSatisfaction{\assign} \varphi
	\end{align*}	
\end{definition}
The definition creates for each interpretation the set of executions which have a trace through $S$ and then checks if $\varphi$ is satisfied on these sets. We show that we cannot check the satisfaction of a \htslm formula on such a system directly.
\begin{lemma}
	Given a \htslm formula $\varphi$, it is undecidable whether a Mealy machine $S$ over $\ap = \transl{\predTerms} \cup \transl{\updateTerms}$ models $\varphi$.
\end{lemma}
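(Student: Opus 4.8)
The plan is to reduce from the satisfiability problem of TSL, which is undecidable~\cite{DBLP:journals/corr/abs-2104-14988}. Given a TSL formula $\psi$, I would construct an instance of the modelling question whose answer encodes whether $\psi$ is \emph{unsatisfiable}. Two gadgets carry the argument. First, I take $S$ to be the trivial single-state Mealy machine $(\{q_0\}, 2^\ap, \delta, q_0)$ over $\ap = \transl{\predTerms}\cup\transl{\updateTerms}$ with a self-loop on every label, i.e. $\delta = \{(q_0,\sigma,q_0)\mid\sigma\in 2^\ap\}$, so that $\mathit{traces}(S) = (2^\ap)^\omega$. Second, I take the \htslm formula $\varphi \coloneqq \forall \pi \ldot \neg\, \psi_\pi$, where $\psi_\pi$ is the single-variable lifting of $\psi$ that annotates every predicate and update term with $\pi$ (the same lifting used earlier in the paper for the pseudo-hyperproperty undecidability result). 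The terms of $\psi$ determine $\predTerms$ and $\updateTerms$, hence the alphabet of $S$, so the two gadgets are over a common $\ap$.

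The first step is to compute the back-translated execution set appearing in the definition of ``$S$ models $\varphi$''. For every interpretation $\assign$, the set $\{(\computation,\iota)\mid \exists t\in\mathit{traces}(S)\ldot t=\translT{(\computation,\iota)}\}$ equals the full set of executions $\compStream\times\inputStream$: since $\mathit{traces}(S)=(2^\ap)^\omega$, the membership condition $\translT{(\computation,\iota)}\in\mathit{traces}(S)$ holds vacuously for every execution. The point worth stressing is that this set is therefore \emph{independent} of $\assign$, even though the individual images $\translT{(\computation,\iota)}$ depend on $\assign$ through their predicate propositions.

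The second step is to unfold the two quantifier layers. By the definition of modelling, $S$ models $\varphi$ iff for every interpretation $\assign$ we have $\compStream\times\inputStream \TSLSatisfaction{\assign} \forall \pi \ldot \neg\, \psi_\pi$. By the \htslm semantics of the universal quantifier, this holds iff for every $\assign$ and every execution $e$ we have $e \not\TSLSatisfaction{\assign} \psi$, using that evaluating $\psi_\pi$ on the assignment $[\pi\mapsto e]$ coincides with TSL satisfaction of $\psi$ on $e$ under $\assign$ (the predicate and update cases of the \htslm semantics reduce, via $\projcomp{\cdot}$ and $\projin{\cdot}$, to the corresponding TSL clauses, and the temporal and Boolean cases are identical). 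By the definition of TSL satisfiability, ``for every $\assign$ and every $e$, $e\not\TSLSatisfaction{\assign}\psi$'' is exactly the negation of ``$\psi$ is satisfiable''. Hence $S$ models $\varphi$ iff $\psi$ is unsatisfiable, and undecidability transfers.

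The step I expect to require the most care is aligning the quantifier structures. The definition of modelling carries a \emph{universal} quantifier over interpretations $\assign$, whereas TSL satisfiability is an \emph{existential} statement over $\assign$ (and over executions). The device that makes the reduction go through is to route it through the complement: by placing $\psi$ under a negation and a universal execution quantifier, the outer ``for all $\assign$'' of modelling and the inner ``for all $e$'' collapse precisely into the non-satisfiability statement $\neg\,\exists\assign\,\exists e\ldot e\TSLSatisfaction{\assign}\psi$. I would also verify the harmless-looking fact that $\translT{(\computation,\iota)}$ meets the implicit cell constraints of $\ap$ for every execution, so that the universal machine truly admits every execution's image; this is immediate because a computation assigns exactly one function term to each cell at each step, and in any case $\mathit{traces}(S)=(2^\ap)^\omega$ makes the membership test trivial.
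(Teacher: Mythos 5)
Your proposal is correct and follows essentially the same route as the paper's own proof: a reduction from TSL (un)satisfiability using the universal Mealy machine that produces every trace over $\ap = \transl{\predTerms} \cup \transl{\updateTerms}$ together with the formula $\forall \pi \ldot \neg \psi_\pi$, so that the machine models the formula iff $\psi$ is unsatisfiable. Your write-up merely spells out the details the paper leaves implicit (the back-translated execution set being all of $\compStream\times\inputStream$, and the alignment of the quantifier over interpretations with the negated existential in satisfiability), which is a faithful elaboration rather than a different argument.
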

\begin{proof}
	We show that the problem is already undecidable for TSL formulas expressed in \htslm as $\forall \pi \ldot \psi_\pi$, where $\psi_\pi$ is the $\pi$-indexed version of a TSL formula $\psi$ .
	We proceed by reduction from the unsatisfiability problem of TSL. Let $\psi$ be a TSL property. $\psi$ is unsatisfiable if for every $\assign$ and every $\computation$, $\iota$, it holds that $\computation, \iota \TSLSatisfaction{\assign} \neg \psi$.
	Let $S$ be the system that produces every trace over $\ap = \transl{\predTerms} \cup \transl{\updateTerms}$. Then, $S$ models $\forall \pi \ldot \neg \psi_\pi$ iff $\psi$ is unsatisfiable.
\end{proof}

However, owing to \Cref{lem:equivalence}, for every $\forall^*$ \htslm formula $\varphi$, if a Mealy machine $S$ models $\transl{\varphi}$, then it also models $\varphi$.

\begin{theorem}
	Let a $\forall^*$ \htslm formula and a Mealy machine $S$ over $\ap = \transl{\predTerms} \cup \transl{\updateTerms}$ be given. If $S$ models $\transl{\varphi}$, then $S$ models $\varphi$.
\end{theorem}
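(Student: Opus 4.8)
The plan is to mirror the argument used in the proof of \Cref{thm:forallHTSL} and reduce everything to \Cref{lem:equivalence} together with the downwards closedness of universal HyperLTL properties. First I would fix an arbitrary interpretation $\assign$ and, following the definition of ``$S$ models $\varphi$'', set
$$
\exSet \coloneqq \set{(\computation, \iota) \mid \translT{(\computation, \iota)} \in \mathit{traces}(S)},
$$
where the translation $\translT{\cdot}$ is taken with respect to this fixed $\assign$ (recall that $\translT{\cdot}$ depends on $\assign$ through the evaluation function). The goal then reduces to showing $\exSet \TSLSatisfaction{\assign} \varphi$ for this arbitrary $\assign$, and since ``$S$ models $\transl{\varphi}$'' just means $\mathit{traces}(S) \models \transl{\varphi}$ (ordinary HyperLTL satisfaction), that hypothesis is available.

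The key observation I would record next is that applying $\translT{\cdot}$ to $\exSet$ yields exactly those traces of $S$ that lie in the image of the translation map, i.e., $\translT{\exSet} = \set{\translT{\execution} \mid \translT{\execution} \in \mathit{traces}(S)}$, so $\translT{\exSet} \subseteq \mathit{traces}(S)$. The inclusion can be strict precisely because $S$ is built from the LTL approximation and may contain spurious traces $t$ for which no execution $\execution$ satisfies $\translT{\execution} = t$; this is the same phenomenon noted after \Cref{prop:tsl_ltl}.

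Now I would use that $\varphi$ is a $\forall^*$ formula, hence $\transl{\varphi}$ is a $\forall^*$ HyperLTL formula, and universal HyperLTL properties are downwards closed (closed under taking subsets of the trace set). From $\mathit{traces}(S) \models \transl{\varphi}$ and $\translT{\exSet} \subseteq \mathit{traces}(S)$, downwards closedness yields $\translT{\exSet} \models \transl{\varphi}$. Finally, \Cref{lem:equivalence} translates this back: $\translT{\exSet} \models \transl{\varphi}$ iff $\exSet \TSLSatisfaction{\assign} \varphi$. As $\assign$ was arbitrary, this establishes that $S$ models $\varphi$.

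The proof is essentially routine once \Cref{lem:equivalence} is in hand, so there is no deep obstacle; the one point that requires care is that only the subset inclusion $\translT{\exSet} \subseteq \mathit{traces}(S)$ holds, not equality, and the observation that this weaker inclusion still suffices because downwards closedness of $\forall^*$ properties transports satisfaction from a superset to a subset. This is exactly why the statement is restricted to the $\forall^*$ fragment: the analogous claim would fail once existential quantifiers are present, just as \Cref{thm:forallHTSL} does not extend to them.
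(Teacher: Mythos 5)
Your proof is correct and follows essentially the same route as the paper's: both define, for each fixed interpretation $\assign$, the execution set $\set{(\computation, \iota) \mid \translT{(\computation, \iota)} \in \mathit{traces}(S)}$, observe the subset inclusion $\translT{\exSet} \subseteq \mathit{traces}(S)$, invoke downwards closedness of universal properties, and close with \Cref{lem:equivalence}. Your additional remarks on why only inclusion (not equality) holds and why the $\forall^*$ restriction is needed are accurate elaborations of what the paper leaves implicit.
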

\begin{proof}
	For every $\assign$, let $S_\assign = \set{(\computation, \iota) \mid \exists t \in \mathit{traces}(S) \ldot t = \translT{(\computation, \iota)}}$. By construction, $\translT{S_\assign} \subseteq \mathit{traces}(S)$. Since $S \models \transl{\varphi}$ and since universal properties are downwards closed, $\translT{S_\assign} \models \transl{\varphi}$ and by \Cref{lem:equivalence} we have $S_\assign \TSLSatisfaction{\assign} \varphi$.
\end{proof}

Based on this theorem, we repair the synthesized transition system with respect to the corresponding HyperLTL formula. 
To do so, we enumerate all refinements which leave exactly one transition for each choice. Thereby, the refinement still implements a strategy. Since the HyperLTL is a universal formula, this approach detects a correct refinement iff there is one.
Note that also with respect to HyperLTL, this cannot be a complete method, as $\forall^*$ HyperLTL synthesis is undecidable~\cite{DBLP:journals/acta/FinkbeinerHLST20}.
The general problem of repairing HyperLTL formulas has been discussed in~\cite{DBLP:conf/atva/BonakdarpourF19}, the problem is \texttt{NP}-complete for $\forall^*$ HyperLTL.
An approach similar to our has been described for controller synthesis in~\cite{DBLP:conf/csfw/BonakdarpourF20}, which distinguishes between controllable and uncontrollable inputs.
Implementations have not been developed yet. 

\section{Implementation and Experiments}
\label{subsec:implementation}
We implemented a prototype of the repair algorithm as a Python script. Given a synthesized smart contract in form of a Mealy machine and a universal HyperLTL formula, the script checks if the complete system satisfies the property. If not, it searches for free choices, performs a self-composition, and checks if one of the possibilities to resolve the choices satisfies the LTL body of the HyperLTL formula. For LTL model checking we use the state-of-the-art model checker nuXmv~\cite{DBLP:conf/cav/CavadaCDGMMMRT14}.
Using our implementation, we conducted experiments on different version of the voting contract and on a blinded auction contract.

\subsection{Voting Contract}
We repaired four variants of the voting protocol with respect to the hyperproperties described in this paper. The results are depicted in \Cref{tab:implementation}.
None of the three systems initially satisfies either of the properties, the number of calls given in the table therefore refers to the number of refinements tested with nuXmv.
\begin{table}
	\caption{Results of the prototype implementation of the repair algorithm. Times (t) are given in seconds. \# refers to the number of nuXmv calls that were needed to find a repair.
	}
	\renewcommand{\arraystretch}{1.0}
	\setlength{\tabcolsep}{5pt}
	\begin{tabularx}{\columnwidth}{@{} l *{8}{c} @{}}
		\toprule
		 &
		\multicolumn{2}{c}{\textbf{only} \method{vote}} & 
		\multicolumn{2}{c}{\textbf{+} \method{close}} & 
		\multicolumn{2}{c}{\textbf{+ owner}} &
		\multicolumn{2}{c}{\textbf{full}}\\
		\textbf{Property} & \textbf{t} & \textbf{\#} & \textbf{t} & \textbf{\#} & \textbf{t} & \textbf{\#} & \textbf{t} & \textbf{\#} \\
		\midrule
		\makecell[l]{Local determinism} & 0.170 & 1 & 0.475 & 1 & 2.577 & 1 & 7.049 & 1 \\
		\midrule
		\makecell[l]{Local symmetry} & 0.229 & 2 & 1.251 & 6 & 64.90 & 86 & 308.1 & 120 \\
		\midrule
		\makecell[l]{Global no harm} & 0.163 & 1 & 0.473 & 1 & 2.786 & 1 & 219.9 & 86 \\
		\midrule
		\makecell[l]{Determinism} & 0.147 & 1 & 0.612 & 1 & 27.63 & 35 & 6.825 & 1 \\
		\midrule		
		\makecell[l]{Symmetry} & 0.254 & 2 & 1.630 & 6 & 29.03 & 35 & 6.571 & 1 \\
		\midrule
		\makecell[l]{No harm} & 0.170 & 1 & 0.704 & 1 & 2.993 & 1 & 90.81 & 35 \\
		\midrule
		\makecell[l]{Determinism \\ + Symmetry \\ + No harm} & 0.274 & 3 & 2.105 & 6 & 217.7 & 256 & 760.4 & 256 \\
		\bottomrule
	\end{tabularx}	
	\label{tab:implementation}
\end{table}

The first variant (``only \method{vote}'') reduces the contract to the core specification needed for the hyperproperties to make sense. It has only a \method{vote} method and can be implemented as a single-state system. It has two state-input combinations with a free choice, in each case there are two possible transitions.
The second variant (``+ \method{close}'') adds the \method{close} method and initial assumptions, resulting in a three-state system with four choices, again each with two options.
The third variant (``+ owner'') is the contract as described in Specification~\ref{spec:voting}. It has eight choices. This results in $2^8$ different combinations that might need to be checked.

Lastly, we extended the voting contract with additional features following the voting example of the Solidity Documentation~\cite{voting}. 
The extended version additionally records the registered voters in a field \field{voters} and which addresses have voted in a field \field{voted}.
The contract also has two additional methods: \method{giveRightToVote} may be called by the owner of the contract and adds addresses to \field{voters}; \method{getWinner} may be called after the voting has been closed to learn which candidate won.
The synthesized state machine has again eight choices but is naturally larger than the state machine depicted in \Cref{fig:voting_tsl}.

Determinism, local determinism, symmetry, and the no harm property indicate the formulas given in \Cref{sec:smartContracts}. For symmetry, we included the necessary assumption described in \Cref{subsec:usecase}.
The ``global no harm'' property formulates the no harm property with a single $\G$ instead of $\W$.
Local symmetry states symmetry with respect to the greater predicate as well as the \method{voteA} and \method{voteB} inputs, similar to local determinism.

\subsection{Blind Auction}
As a second contract, we specified a blind auction in TSL, following~\cite{auction}. Similar to the voting protocol, we had to restrict ourselves to a finite number of bidders, otherwise the hyperproperties would not have been expressible in \htslm.
The idea of a blind auction is that bidders send a hash of their actual bid and deposit a value which might be higher or lower than the actual bid.
After the bidding is closed, bidders reveal their bids. If the hash fits the revealed bid and the deposited value is higher than the actual bid, the bid is valid. The winner of the auction is the bidder with the highest valid bid.
Bidders can also withdraw their deposits once a higher bid was revealed.

We specified the contract with two bidders. We use methods \method{bidA}, \method{bidB}, \method{closeBidding}, \method{revealA}, \method{revealB}, \method{closeRevealing}, and \method{withdraw}.
One of the temporal requirements is that bids can only be placed as long as the bidding has not been closed.
$$
	\G (\method{closeBidding} \rightarrow \X \G \neg (\method{bidA} \lor \method{bidB}))
$$
The specification also reasons about fields \field{bidsA}, \field{bidsB}, \field{highestBidder}, \field{highestBid}. One of the obligations on the fields is to update \field{highestBid} if bidder \texttt{A} reveals a correct bid which is higher than all bids revealed so far.
\begin{align*}
	\G( & \method{revealA}\, \land\, \texttt{valid}~ \texttt{bid}~ \texttt{secret} \\
	& \land \texttt{bid} > \field{highestBid}\\
	& \rightarrow \update{\field{highestBid}}{\texttt{bid}})
\end{align*}
In the above specification, \texttt{bid} and \texttt{secret} are parameters of the method \method{revealA}.
As for the voting specification, the TSL specification describes the control flow of the contract, i.e., in which order methods have to be called and how fields need to be updated. Other functionality like checking if a bid is valid needs to be implemented in the predicate \texttt{validBid} to obtain a working contract.

\begin{wraptable}{r}{5.5cm}
	\caption{Repairing the blind auction state machine. Times (t) are given in seconds. \# refers to the number of nuXmv calls.
	}
	\renewcommand{\arraystretch}{1.0}
	\setlength{\tabcolsep}{7pt}
	\centering
	\begin{tabularx}{5cm}{@{} l *{2}{c} @{}}
		\toprule
		\textbf{Property} & \textbf{t} & \textbf{\#} \\
		\midrule
		\makecell[l]{Local determinism} & 11.13s & 1  \\
		\midrule
		\makecell[l]{Local symmetry} & 17.42s & 2 \\
		\bottomrule
	\end{tabularx}	
	\label{tab:implementation2}
\end{wraptable}
Similar to the voting case, we left the free choice which bid is stored as highest bid if the currently revealed bid is the same is the current highest bid. The synthesized state machine leaves free choices at two nodes.
Using our implementation, we obtained strategies to resolve a tie in a way that local determinism or local symmetry are satisfied.
We report the running times and number of calls to nuXmv in \Cref{tab:implementation2}.

\subsection{Evaluation}

Our experiments show that the hyperproperties discussed in this paper are realizable on top of variants of a voting contract and a blinded auction. The evaluation shows that the runtime mainly depends on the number of choice options that had to be tested with \textsc{nuXmv}. The runtime also increases with larger formulas and larger state machines.

This implementation is only a prototype to see if the idea to resolve free choices via repair works for relevant hyperproperties. There are many options to improve performance. For example, one could switch to a model checker that is based on B\"uchi or Parity automata (e.g., spot~\cite{duret.16.atva2}). Like this, the LTL formula (which stays the same for every call) would have to be translated to an automaton only once.

\section{Related Work}
Our work brings together research on formal methods for smart contract control flows, reactive synthesis using TSL, and logics for hyperproperties.

\myparagraph{Formal methods for smart contract control flows}
Most closely related work is the recent approach to synthesize the control and data flow of a smart contract from TSL specifications~\cite{scsynt}. We build on the fact that they synthesize the full winning region of the specification, which we refine to find a strategy that satisfies a \htslm specification.
Another approach to synthesize smart contract control flows is~\cite{DBLP:journals/corr/abs-1906-02906}, which synthesizes the order of method calls from an LTL specification.
Formal guarantees have been also obtained by modeling the control flow of a contract directly with a finite state machine and transform the state machine to Solidity code. In the \textsc{FSolidM} framework~\cite{DBLP:conf/fc/MavridouL18} the state machine can be modeled in a graphical editor.
A lot of work has been invested to verify a contract against correctness properties instead of generating it automatically.
The \textsc{VeriSol}~\cite{DBLP:conf/vstte/0001LCPDBNF19, DBLP:journals/corr/abs-1812-08829} framework checks a smart contract against properties defined with finite state machines.
\textsc{VerX}~\cite{DBLP:conf/sp/PermenevDTDV20} and \textsc{SmartPulse}~\cite{DBLP:conf/sp/StephensFMLD21} verify that a contract satisfies specifications given in LTL-like logics, where \textsc{SmartPulse} can handle not only safety but also liveness properties.

\myparagraph{Temporal Stream Logic}
TSL is a temporal logic that has been developed to simplify reasoning about infinite-state systems by separating control and data~\cite{FinkbeinerKPS19}. Due to its expressiveness, TSL synthesis is undecidable in general but sound CEGAR loops based on LTL synthesis work well in practice~\cite{FinkbeinerKPS19}. TSL has been successfully applied to specify and synthesize an arcade shooter game running on an FPGA~\cite{DBLP:journals/corr/abs-2101-07232} and functional reactive programs~\cite{DBLP:conf/haskell/Finkbeiner0PS19}.
The TSL synthesis approach has been extended with decidable theories like linear integer arithmetic~\cite{DBLP:journals/corr/abs-2108-00090}. Here, counterexamples provided by the CEGAR loop are checked for consistency with the theory using an SMT solver.
TSL modulo first-order theories has also been studied with respect to its satisfiability problem~\cite{DBLP:journals/corr/abs-2104-14988}.

\myparagraph{Hyperproperties}
The interest in hyperproperties, or relational properties, stems from information flow policies like noninterference and observational determinism. Since the term has been coined for the general class of properties~\cite{DBLP:conf/csfw/ClarksonS08}, a range of logics have been developed for the specification of hyperproperties. Examples are temporal logics like HyperLTL, HyperCTL$^*$~\cite{HyperLTL}, and HyperPDL\cite{DBLP:conf/concur/GutsfeldMO20}, first-order and second-order logics~\cite{DBLP:journals/corr/FinkbeinerZ16, DBLP:conf/lics/CoenenFHH19}, or asynchronous extensions mostly based on HyperLTL~\cite{DBLP:conf/cav/BaumeisterCBFS21, DBLP:conf/lics/BozzelliPS21, DBLP:journals/pacmpl/GutsfeldMO21}.
All of the above logics are targeted to specify properties in the context of hardware and therefore assume a finite state space.
Recently, the temporal logic of actions (TLA) has also been used to verify hyperproperties via self-composition~\cite{DBLP:conf/csfw/LamportS21}.
The synthesis from hyperproperties is known to be challenging. HyperLTL can express distributed architectures, which makes the synthesis from $\forall^*$ properties already undecidable~\cite{DBLP:journals/acta/FinkbeinerHLST20}.
A simpler variant of synthesis is controller synthesis, where the state space is given. Related to our approach, the resulting problem is similar to the repair problem and is therefore decidable for HyperLTL~\cite{DBLP:conf/csfw/BonakdarpourF20}.
The classic program repair problem does not distinguish between inputs and outputs and has been studied for HyperLTL as well~\cite{DBLP:conf/atva/BonakdarpourF19}.
The analysis of smart contract with respect to hyperproperties has not received enough attention yet.
A recent approach uses type-checking to ensure information flow policies such as integrity in smart contracts in the context of reentrancy attacks~\cite{DBLP:conf/sp/CecchettiYNM21}.
Other hyperproperties identified in the context of smart contracts are integrity properties to prevent reentrancy attacks~\cite{DBLP:conf/post/GrishchenkoMS18, DBLP:journals/pacmpl/AlbertGRRRS20}

\section{Conclusion and Future Work}

We have presented two logics, \htslm and \htslp, which extend the temporal logic TSL to express hyperproperties. Inherited from TSL, both logics express the control flow of infinite-state systems by abstracting from concrete data with cells and uninterpreted functions and predicates.
\htslp predicates relate several executions, which makes the logic more expressive than \htslm. \htslm, on the other hand, is more suitable for synthesis tasks.
We have demonstrated that both logics can express typical hyperproperties that occur in the context of smart contracts.
The realizability problem of the universal fragment of \htslm can be approximated with HyperLTL realizability, the unrealizability problem of the existential fragment can be approximated with LTL satisfiability.
As a step towards making synthesis from hyperproperties reality, we have described two approaches to integrate universal \htslm specifications in the synthesis workflow of smart contracts specified with TSL.
The first approach checks for overly complicated specifications which have equivalent formulations in TSL. The second approach refines the winning region obtained from the TSL synthesis to resolve choices in a way that satisfy the hyperproperty.

\myparagraph{Future work}
This work has opened many interesting paths for future work.
To profit from the expressiveness of \htslp, one would have to include theories (e.g., the theory of equality) into the synthesis from \htslp specifications. Prior work has made steps in that direction for TSL~\cite{DBLP:journals/corr/abs-2108-00090, DBLP:journals/corr/abs-2104-14988}, one could extend these ideas to \htslp. Using theories, \htslp would also be interesting in the context of model checking smart contracts or other infinite-state systems.
Another open question is how to approximate the HyperTSL realizability problem for formulas with quantifier alternations.

%
%
%
\bibliographystyle{IEEEtran}
\bibliography{bib}

\end{document}